\documentclass[12pt,draftclsnofoot,onecolumn]{IEEEtran}

\usepackage{cite}
\usepackage[hyperfootnotes=false,hyperfigures=false,hidelinks]{hyperref}
\usepackage{stmaryrd}
\usepackage{footnote}
\usepackage{graphicx}
\usepackage{tikz,pgfplots}
\usetikzlibrary{positioning,shapes.misc,shapes.arrows}
\usetikzlibrary{arrows}
\usetikzlibrary{calc}
\pgfplotsset{compat=1.8}
\usepackage{mathtools}
\usepackage{amsmath,amsfonts,amssymb,amsthm}
\usepackage[utf8]{inputenc}
\usepackage[english]{babel}
\usepackage{color}
\usepackage[T1]{fontenc}
\usepackage{subcaption}


\newtheorem{definition}{Definition}
\newtheorem{theorem}{Theorem}

\newtheorem{lemma}[theorem]{Lemma}
\newtheorem{corollary}{Corollary}

\newtheorem{remark}{Remark}


\newcommand{\defeq}{\triangleq}

\def\Pr{\mathrm{P}}
\def\P{\phi}

\newcommand{\HM}{\pmb{H}}
\newcommand{\IM}{\pmb{I}}
\newcommand{\Id}{\pmb{{I}}}
\newcommand{\NM}{\pmb{N}}

\newcommand{\Lcal}{\mathcal{L}}

\newcommand{\Ccal}{\mathcal{C}}
\newcommand{\Fcal}{\mathcal{F}}
\newcommand{\Ucal}{\mathcal{U}}
\newcommand{\Tcal}{\mathcal{T}}
\newcommand{\Mcal}{\mathcal{M}}
\newcommand{\Xcal}{\mathcal{X}}
\newcommand{\Ical}{\mathcal{I}}
\newcommand{\Jcal}{\mathcal{J}}
\newcommand{\Qcal}{\mathcal{Q}}
\newcommand{\Kcal}{\mathcal{K}}

\newcommand{\rvVec}[1]{\pmb{\mathrm{#1}}}
\newcommand{\rvMat}[1]{\pmb{\mathsf{#1}}}
\newcommand{\nr}[1]{n_{\text{r},#1}}
\newcommand{\nt}{n_{\text{t}}}
\newcommand{\snr}{\mathsf{snr}}
\newcommand{\V}[2]{V_{#1\shortrightarrow#2}}

\newcommand{\Yhat}[2]{\hat{Y}_{#1\shortrightarrow#2}}
\newcommand{\rvVecYhat}[2]{\hat{\rvVec{Y}}_{#1\shortrightarrow#2}}
\newcommand{\Y}[2]{{Y}^{(#1)}_{#2}}

\newcommand{\vecy}[2]{{\pmb{y}}^{(#1)}_{#2}}
\newcommand{\M}[2]{M_{#1\shortrightarrow#2}}
\newcommand{\Mhat}[2]{\hat{M}_{#1\shortrightarrow#2}}
\newcommand{\R}[2]{R_{#1\shortrightarrow#2}}
\renewcommand{\v}[2]{v_{#1\shortrightarrow#2}}
\newcommand{\vecv}[2]{\pmb{v}_{#1\shortrightarrow#2}}
\newcommand{\yhat}[2]{\hat{y}_{#1\shortrightarrow#2}}
\newcommand{\vecyhat}[2]{\hat{\pmb{y}}_{#1\shortrightarrow#2}}
\newcommand{\m}[2]{m_{#1\shortrightarrow#2}}
\newcommand{\X}[1]{X^{(#1)}}
\newcommand{\rvVecX}[1]{\rvVec{X}^{(#1)}}
\newcommand{\x}[1]{x^{(#1)}}
\newcommand{\vecx}[1]{\pmb{x}^{(#1)}}
\newcommand{\Q}[1]{Q^{(#1)}}
\newcommand{\q}[1]{q^{(#1)}}
\newcommand{\vecq}[1]{\pmb{q}^{(#1)}}
\renewcommand{\S}[1]{S^{(#1)}}
\newcommand{\s}[1]{s^{(#1)}}
\newcommand{\vecs}[1]{\pmb{s}^{(#1)}}

\newcommand{\cond}{\,|\,}



\begin{document}

\title{A Novel Transmission Scheme for the $K$-user Broadcast Channel with Delayed CSIT}

\author{Chao~He,~\IEEEmembership{Student Member,}  Sheng~Yang,~\IEEEmembership{Member,} and~Pablo~Piantanida,~\IEEEmembership{Senior~Member}}%
\maketitle

\begin{abstract}
The state-dependent $K$-user memoryless Broadcast Channel~(BC) with state
feedback is investigated. We propose a novel transmission scheme and derive
its corresponding achievable rate region, which, compared to some general
schemes that deal with feedback, has the advantage of being relatively simple and thus
is easy to evaluate. In particular, 
it is shown that the capacity region of the symmetric erasure BC with an arbitrary input alphabet size is
achievable with the proposed scheme. For the fading Gaussian BC, we derive a
symmetric achievable rate as a function of the signal-to-noise ratio~(SNR) and a small set of parameters. 
Besides achieving the optimal degrees of freedom at high SNR, the proposed scheme is shown, through numerical
results, to outperform existing schemes from the literature in the finite SNR regime.
{\let\thefootnote\relax\footnote{The authors are with the Laboratoire des Signaux et Systèmes (L2S, UMR CNRS 8506),  CentraleSup\'elec - CNRS - Universit\'e Paris-Sud, 3, rue Joliot-Curie, 91190, Gif-sur-Yvette, France. (\{chao.he, sheng.yang, pablo.piantanida\}@centralesupelec.fr).}}
\end{abstract}

\begin{IEEEkeywords}
Broadcast channel; Erasure channel; Fading Gaussian channel; State feedback.
\end{IEEEkeywords}

\section{Introduction}


With the dramatic growth of the number of mobile devices, modern wireless communication networks have become
interference limited. As such, the interference mitigation problem has attracted a surge of interest in recent
years. In a downlink Broadcast Channel~(BC), for instance, it is well known that interference can be
efficiently mitigated through precoding, provided that timely Channel State Information~(CSI) is available at
the transmitter side~(CSIT)~(see, e.g., \cite{Caire-Jindal-Kobayashi-Ravindran-TIT10} and the references therein).
While timely CSIT may not be available in mobile communications, it has been revealed in~\cite{wang2012capacity,gatzianas2013multiuser,maddah2012completely} that delayed CSIT is still very useful and can strictly enlarge the capacity region of a BC.  

In particular, the capacity region of the erasure BC (also referred
to as the EBC) with delayed CSIT was fully determined for up to three users and partially characterized for the
case with more users \cite{wang2012capacity, gatzianas2013multiuser}. The main idea behind their proposed schemes
in \cite{wang2012capacity} and \cite{gatzianas2013multiuser} is fundamentally the same: the source first sends out the source message packets, then
generates according to the state feedback some adequate \emph{linear combinations} of the packets that are erased at certain receivers but overheard
by some others. Such linear combinations are then multicast to a group of users in later phases. 
Their schemes are carefully designed such that at the end of the transmission a sufficient number of linearly
independent combinations are available to each receiver for the decoding of the original message packet. However, the schemes
in~\cite{wang2012capacity,gatzianas2013multiuser} are limited to packet erasure channels for which the input
alphabet size can only be $2^q$ with
$q\in\mathbb{N}$ being the number of bits per packet. In addition, there is an extra constraint, $2^q\ge K$, to guarantee the existence of a desired number of linearly independent vectors in the corresponding
vector space in finite field. As such, the capacity region is still open for the general EBC with arbitrary alphabet sizes.

For the multi-antenna fading Gaussian BC (also referred to as the GBC) with delayed CSIT, 
Maddah-Ali and Tse proposed a linear scheme that
achieves the optimal Degree of Freedom~(DoF) for the $K$-user Multiple-Input-Single-Output~(MISO) case. The authors
showed that with delayed CSIT the sum-DoF can still scale almost linearly with the number of users. Remarkably,
there is a striking similarity between the Maddah-Ali-Tse~(MAT) scheme and the
schemes from~\cite{wang2012capacity,gatzianas2013multiuser}. 
Namely, based on the CSI feedback, the transmitter can create and transmit useful linear combinations of the past received signals by the users. The intended group of users receive such linear combinations and use them to decode the message together with the previous observations.
Note that the MAT scheme in~\cite{maddah2012completely} has a fixed structure designed based 
 on a dimension counting argument. Although such a structure guarantees the DoF optimality at high
 Signal-to-Noise-Ratio~(SNR), it may not be efficient at finite SNR due to its inflexibility. 

 As a matter of fact, there are only a small number of works on the performance gain with delayed CSIT in the
 finite SNR regime. In~\cite{yi2013precoding}, the authors developed two linear precoding methods that attempt to 
 balance 
 the interference and the useful received signal. For $K=2$ and $3$, performance gain over MAT was revealed when a
 specific type of decoder is used. To adjust the multicast cost in the MAT scheme, the authors of
 \cite{ali2014approximate} proposed to transmit a quantized version of the linear combinations. For the
 $K$-user Rayleigh fading case, they
 demonstrated that a gap between the corresponding inner bound and a genie-aided outer bound, in terms of the
 symmetric rate, is upper bounded by $2\log_2 (K+2)$ which scales sublinearly with $K$. More recently,
 the work \cite{wangjue2013precoder} studied a scenario where both the CSI statistics and the feedback of
 the channel realizations are available at the transmitter. It was shown by numerical examples that statistics of CSI
 can enlarge the rate region for temporally correlated Rayleigh fading GBC. The authors of \cite{clerckx2015space}
 investigated the outage performance for GBC with an adapted MAT scheme. It is worth mentioning that these schemes
 are variants of the linear MAT scheme, i.e., they applied either linear coding or linear coding with quantization
 with the same fixed frame structure of MAT. 
 Although the rate performance of the MAT-like schemes is rather convincing in the medium-to-high SNR regime, their
 performance in the medium-to-low SNR regime is still questionable since it can be strictly dominated by the simple
 time-division multiple access~(TDMA) strategy~\cite{he2014capacity}. 
 
Instead of imposing the linear structure, we can tackle the problem directly from the
information-theoretic perspective. To that end, we formulate the setup as  
a $K$-user state-dependent memoryless BC with state feedback. This formulation includes both the EBC and
the GBC as special cases. In the two-user case, 
Shayevitz and Wigger studied such BC with generalized feedback and derived a general achievable
rate region using information-theoretic tools~\cite{shayevitz2013capacity}. Later on, Kim~\emph{et
al.}~demonstrated~in~\cite{kim2015note} that in the two-user symmetric setting, the Shayevitz-Wigger~(SW) region,
actually includes the MAT region. Similar recent works on the two-user case have been reported in
\cite{venkataramanan2013achievable, wu2016coding}. In this work, we are interested in the general $K$-user case. 
The main contributions are summarized as follows.
\begin{itemize}
\item We propose a novel scheme for the general $K$-user channel and derive the corresponding achievable rate region. The novelty of this scheme
  lies in the proper combination of two main ingredients: coded time-sharing and joint source-channel coding~(JSC) with side information at the
  decoder. We refer to our scheme in short as the JSC scheme.
  As compared to the existing schemes, e.g., the Shayevitz-Wigger scheme~(which is limited to two users)~\cite{shayevitz2013capacity},  
  our scheme is conceptually simpler in the sense that neither block-Markov coding nor Marton coding is required.  Such simplicity, at the cost
  of a slight loss of generality, allows us to derive the $K$-user rate region with a reasonable number of parameters. To the best of our
  knowledge, the JSC scheme is the first information-theoretic scheme for the $K$-user BC with state feedback for $K\ge3$. 
\item The general rate region is then evaluated for both the EBC and fading GBC. First, we show that our scheme achieves the capacity of a
  symmetric EBC with an arbitrary input alphabet size, whereas the previous schemes
  in~\cite{wang2012capacity,gatzianas2013multiuser} only apply to packet erasure channels. 
Second, for the symmetric fading GBC, we derive the achievable symmetric rate as a function of SNR and a set of $K-1$ compression noise variances. 
At high SNR, we show analytically that the proposed scheme achieves the optimal DoF under the same setting as in~\cite{maddah2012completely}. 
At finite SNR, we perform numerical optimization over the set of $K-1$ variances. The results show that 
in the two- and three-user cases, the JSC scheme outperforms the existing schemes at \emph{all} SNR. 
\end{itemize}

The remainder of the paper is organized as follows. We introduce the system model formally in
Section~\ref{ch-Kuser:general model}. Then we begin with the two-user case in Section~\ref{sec:two-user}, before
presenting the general $K$-user scheme in a more abstract way in Section~\ref{ch-Kuser:main result}. The general
region is applied to the erasure BC and fading Gaussian BC in Section~\ref{sec:application}. In
Section~\ref{ch-Kuser:sec:Simulation}, numerical results are provided for the two-user and
three-user fading Gaussian BC where we compare the JSC scheme to some baseline schemes from the literature. The
paper is concluded in Section~\ref{ch-Kuser:sec:Conclusion}. Although most of the derivations are provided in the
main text, some more technical details are deferred to the appendices.

\subsubsection*{Notation}
First, for random quantities, we use upper case letters, e.g., $X$, for scalars, upper case letters with bold and non-italic fonts,
e.g., $\rvVec{V}$, for vectors, and upper case letter with bold and sans serif fonts, e.g., $\rvMat{M}$, for matrices. 
Deterministic quantities are denoted in a rather conventional way with italic letters, e.g., a scalar $x$, a vector
$\pmb{v}$, and a matrix $\pmb{M}$. Logarithms are in base $2$.  
Calligraphic letters are used for sets. In particular, we let $\Kcal\defeq\left\{ 1,\ldots,K \right\}$ be the
set of all users. To denote subset of users, we use $\Ical$ and $\Jcal$ for some subsets with implicit size
constraints $|\Ical|=i$ and $|\Jcal|=j$, respectively. The constraints are
made explicit when necessary. $\Ucal$ is also used as subset of users but without size constraint. Hence,
$\{V_{\Ical}\}_{\Ical}\equiv \{V_\Ical: \Ical \subseteq \Kcal,
|\Ical| = i\}$ and $\{V_{\Ucal}\}_{\Ucal}\equiv \{V_\Ucal: \Ucal
\subseteq \Kcal\}$. 
The complement of $\Ical$ in $\Jcal$ is denoted by $\Jcal\setminus\Ical$.  We use $\bar{\Ucal}$ to denote the
complement of the set $\Ucal$ in $\Kcal$, i.e., $\bar{\Ucal} = \Kcal\setminus\Ucal$. 

\section{System Model}\label{ch-Kuser:sec:systemmodel}
\label{ch-Kuser:general model}

We consider a $K$-user state-dependent memoryless BC in
which the source wishes to communicate, in $n$ channel uses, $K$ independent messages to the
$K$ receivers, respectively. The channel can be described by the joint probability mass function~(pmf),
\begin{align}
  p(\pmb{y}_1,\ldots,\pmb{y}_K|\pmb{x},\pmb{s})p(\pmb{s}) =\prod_{i=1}^{n}
p(y_{1i},\ldots,y_{Ki}|x_i,s_i) p(s_i) \label{ch-Kuser:eq:system model}
\end{align}
where $\pmb{x}\in \mathcal{X}^n$, $\pmb{y}_k\in\mathcal{Y}_k^n$, and
$\pmb{s}\in\mathcal{S}^n$ are the sequences of the channel input, the channel output at the $k$-th
receiver, and the channel state, respectively. The channel state information~(CSI) is
known instantaneously to all the receivers. At transmitter's side, the channel state is known
strictly causally without error via a noiseless feedback link from the receivers. For simplicity, we
assume that the CSI is provided at the transmitter with one slot delay and the channel itself is
temporally i.i.d. The channel model is illustrated in Fig.~\ref{ch-Kuser:fig:systemmodel}.

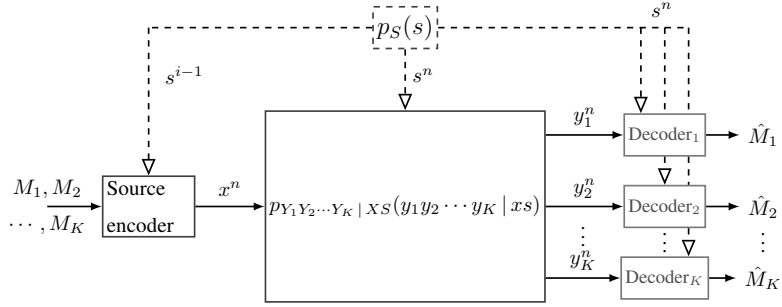
\begin{figure}[!t]
{\centering
\resizebox{0.65\textwidth}{!}{
\begin{tikzpicture}[scale=1,
enc/.style={rectangle,minimum size=7mm,draw=black!75,fill=white!30,thick,inner sep=2pt},
transp/.style={rectangle,minimum size=7mm,minimum height=32mm,draw=black!75,fill=white!30,thick,inner sep=2pt},
dec/.style={rectangle,minimum size=7mm,draw=black!75,fill=white!30,thick,inner sep=2pt,opacity = 0.7},
state/.style={rectangle,minimum size=7mm,draw=black!75,fill=white!30,thick,inner sep=2pt,dashed}
]
\node at (-6,0.3) (msg) [font=\small]{$M_1,M_2$};
\node at (-6,-0.3) (msg) [font=\small]{$\cdots,M_K$};
\node at (-4.3,0) (src) [enc,font=\small,text width=14mm]{Source encoder};
\node at (0,0) (prob) [transp,font=\small]{$p_{Y_1 Y_2\cdots Y_K \cond X S}(y_1 y_2 \cdots y_K \cond x s)$};
\node at (4.35,1.2) (dec1) [dec,font=\footnotesize]{$\textrm{Decoder}_1$};
\node at (4.35,0) (dec2) [dec,font=\footnotesize]{$\textrm{Decoder}_2$};
\node at (4.35,-0.5) (decvdot) [font=\small]{$\vdots$};
\node at (4.35,-1.2) (decK) [dec,font=\footnotesize]{$\textrm{Decoder}_K$};
\node at (0,3) (state) [state]{$p_S(s)$};
\node at (6,1.2) (estmsg1) [font=\small]{$\hat{M}_1$};
\node at (6,0) (estmsg2) [font=\small]{$\hat{M}_2$};
\node at (6,-0.5) (estmsgvdot) [font=\small]{$\vdots$};
\node at (6,-1.2) (estmsgK) [font=\small]{$\hat{M}_K$};

\node at (-3.7,2.2) (msg) [font=\small]{$s^{i-1}$};
\node at (0.3,2.2) (msg) [font=\small]{$s^n$};
\node at (4.3,3.3) (msg) [font=\small]{$s^n$};

\node at ($(dec1.north)+(0.4,0)$) (phantomnode1) [] {};
\node at ($(dec1.north)+(-0.4,0)$) (phantomnode2) [] {};

\node at ($0.5*(src.east)+0.5*(prob.west)+(0,0.3)$) (input) [font=\small] {$x^n$};
\node at ($0.5*(dec1.west)+0.5*(prob.east|-dec1.west)+(0,0.3)$) (output1) [font=\small] {$y_1^n$};
\node at ($0.5*(dec2.west)+0.5*(prob.east|-dec2.west)+(0,0.3)$) (output2) [font=\small] {$y_2^n$};
\node at ($0.5*(decK.west)+0.5*(prob.east|-decK.west)+(0,0.8)$) (outputvdot) [font=\small]{$\vdots$};
\node at ($0.5*(decK.west)+0.5*(prob.east|-decK.west)+(0,0.3)$) (outputK) [font=\small] {$y_K^n$};

\draw [-latex,thick] (-6,0) to (src);
\draw [-latex,thick] (src) to (prob);
\draw [-latex,thick] (prob.east|-dec1.west) to (dec1.west);
\draw [-latex,thick] (prob.east|-dec2.west) to (dec2.west);
\draw [-latex,thick] (prob.east|-decK.west) to (decK.west);
\draw [-latex,thick] (dec1.east) to (estmsg1.west);
\draw [-latex,thick] (dec2.east) to (estmsg2.west);
\draw [-latex,thick] (decK.east) to (estmsgK.west);
\draw [-open triangle 45,thick, dashed] (state) to (prob);
\draw [-,thick, dashed] (state.west) to (state.west-|src.north);
\draw [-open triangle 45,thick, dashed] (state.west-|src.north) to (src.north);
\draw [-,thick, dashed] (state.east) to (state.east-|phantomnode1.north);

\draw [-open triangle 45,thick, dashed] (state.east-|phantomnode2.north) to (phantomnode2.north|-dec1.north);
\draw [-,thick, dashed] (state.east-|dec1.north) to (dec1.north);
\draw [-,thick, dashed] (state.east-|phantomnode1.north) to (phantomnode1.north|-dec1.north);

\draw [-open triangle 45,thick, dashed] (dec1.south) to (dec2.north);
\draw [-,thick, dashed] (dec1.south-|phantomnode1.north) to (dec2.north-|phantomnode1.north);

\draw [-open triangle 45,thick, dashed] (dec2.south-|phantomnode1.north) to (decK.north-|phantomnode1.north);
%
%
\end{tikzpicture}
}
\caption{General system model of $K$-user BC with state feedback.}
\label{ch-Kuser:fig:systemmodel}
}
\end{figure}

Let the message for user~$k$, $M_k$, be uniformly distributed in the 
 message set $\mathcal{M}_k\defeq [1:2^{n {R}_k}]$, for $k\in\Kcal$. We say that the rate tuple
$(R_1,\ldots,R_K)$ is achievable if there exist 
\begin{itemize}
\item a sequence of encoding
functions $\{f_i:\ \mathcal{M}_1\times\cdots\times\mathcal{M}_K\times
\mathcal{S}^{i-1} \to \mathcal{X}\}_{i=1}^n$, and
\item $K$ decoding functions
$\{g_{k}:\ \mathcal{Y}_k^n\times\mathcal{S}^n \to
\mathcal{M}_k\}_{k=1}^K$,
\end{itemize}
 such that 
$\max_k \Pr\big(g_{k}(Y_k^n, S^n) \ne M_k\big) \to 0$ when $n\to\infty$. The symmetric rate
$R_{\text{sym}}$ is achievable if the rate tuple $(R_{\text{sym}}, \ldots,
R_{\text{sym}})$ is achievable. 
In particular, we are interested in the following two specific channels. 


\subsection{Fading Gaussian Broadcast Channel}

The fading GBC with $n_t$ transmit antennas and $n_{r,k}$ receive antennas at user $k$, $k\in\Kcal$, is defined by
\begin{align}
  \rvVec{Y}_k &= \rvMat{H}_k \rvVec{X} + \rvVec{Z}_k, \quad k\in\Kcal,
\label{ch-Kuser:eq:Gaussiansystemmodel}            
\end{align}%
where $\rvVec{X} \in \mathbb{C}^{\nt\times1}$ is the input vector, $\rvVec{Y}_k \in \mathbb{C}^{\nr{k}\times1}$ is
the output vector at receiver~$k$, $\rvVec{Z}_k \sim \mathcal{CN}(\pmb{0}, \sigma^2 \IM_{\nr{k}})$ is the additive
white Gaussian noise~(AWGN), and $\rvMat{H}_k \in \mathbb{C}^{\nr{k}\times\nt}$ is the channel matrix to
receiver~$k$. The channel input is subject to the power constraint as $\frac{1}{n}\sum_{i=1}^{n} \| \pmb{x}_i \|^2
\leq P$ for any input sequence $\pmb{x}_1,\ldots,\pmb{x}_n$. The SNR is defined as $\snr \defeq
\frac{P}{\nt\sigma^2}$. We assume that both the channel matrices and the AWGN are independent across users. 
We use $\rvMat{H}_{\Ucal}$ to denote a matrix from a vertical concatenation of the channel matrices of receivers in
$\Ucal$, i.e., $\{\rvMat{H}_k\}_{k\in \Ucal}$, same notation applies for $\rvVec{Y}_{\Ucal}$ and $\rvVec{Z}_{\Ucal}$. Hence, it follows that
$\rvVec{Y}_{\Ucal} = \rvMat{H}_{\Ucal} \rvVec{X} + \rvVec{Z}_{\Ucal}$.
The matrix $\rvMat{H}_{\Kcal}$ corresponds to the channel state $S$ in the general formulation. 

\subsection{Erasure Broadcast Channel}\label{ch-Kuser:subsec:EBCsystemmodel}
The EBC is a state-dependent deterministic channel in which
\begin{align}
  Y_k &= \begin{cases} X, & S_k = 1, \\
    \,?\,, & S_k = 0, \end{cases}
\end{align}%
for $k\in\Kcal$. Here the input alphabet $\Xcal$ is arbitrary and finite with size $|\Xcal|$; 
, ``$?$'' stands for erasure, and the output alphabet is $\mathcal{Y}=\Xcal\cup\left\{ ? \right\}$. 
The distribution of the channel state is characterized by the set of probabilities
\begin{align}
  \P_{\Ucal, \bar{\Ucal}} &\defeq \Pr(S_{\Ucal} = \pmb{0},\ S_{\bar{\Ucal}} =
  \pmb{1}), \quad \Ucal \subseteq \left\{ 1,\ldots,K \right\}
\end{align}%
with $\sum_{\Ucal} \P_{\Ucal,\bar{\Ucal}} = 1$. Throughout the paper, we use
$S_{\Ucal} = \pmb{0}$~(\emph{resp.} $S_{\Ucal} = \pmb{1}$) to define the event that
$S_k=0$~(\emph{resp.} $S_k = 1$), $\forall~k\in \Ucal$. For simplicity, we use $\delta_{\Fcal}$ to
denote $\Pr(S_{\Fcal}=\pmb{0})$, and use $\P_{\Fcal,\Tcal}$ to denote
$\Pr(S_{\Fcal}=\pmb{0},S_{\Tcal}=\pmb{1})$  for any $\Fcal$ and $\Tcal$ that satisfy $\Fcal \cap
\Tcal = \phi $ and $\Fcal \cup \Tcal \subseteq \Kcal$.
For notational brevity, $\delta_{\{k\}}$ is written as $\delta_k$. 
The $K$-tuple $S_{\Kcal}$ corresponds to the channel state $S$ in the general formulation.

\section{The Two-User Case}
\label{sec:two-user}

Before presenting the main results for the general $K$-user channel, we provide a description of the two-user case.
The goal is to explain the main ingredients of the proposed scheme in an accessible and less formal way, whereas
a rigorous and detailed description will be provided for the $K$-user case in the next section. Hereafter, we
also refer to our scheme as the JSC scheme. 

\subsection{Scheme description}

The JSC scheme consists in two phases with $n_1$ and $n_2$ being the length of phase~1 and 2, respectively. 
The total transmission length is $n=n_1+n_2$.
In the first phase, the original messages $M_1$ and $M_2$, for receiver~1 and receiver~2, respectively,
are encoded and transmitted. At the end of phase~1, the transmitter obtains the state feedback and thus some side information about the received
signals at both users during phase 1. In the second phase, the transmitter compresses the side information that are useful to both users into
$\hat{Y}$, and transmits the compression index $M_{12}$ with a channel code. Each receiver decodes the compression index first, but with the
observation from \emph{both} phases. With the compression index, the side information $\hat{Y}$ is recovered and combined with the observation
from phase~1 by each receiver $k\in\left\{ 1,2 \right\}$ to finally decode the message $M_k$. 
The main information-theoretic tools that we use in this scheme are the following ones:
\begin{itemize}
  \item Coded time-sharing for the transmission in phase~1;
  \item Joint source-channel coding in phase~2;
  \item Joint source-channel decoding with side information on the source. 
\end{itemize}

\subsection*{\underline{Phase~1}}
At the beginning, we randomly generate a sequence of time-sharing variables $\pmb{q} \defeq (q_1,\ldots,q_{n_1})$
according to $\prod_{t=1}^{n_1}p(q_t)$. For each user $k\in\left\{ 1,2 \right\}$, we generate a random codebook of
$2^{n R_k}$ independent codewords, $\pmb{v}_k(m_k)$, $m_k \in [1:2^{n R_k}]$, each according to $\prod_{t=1}^{n_1}
p(v_{k,t})$. Both the time-sharing sequence and the codebooks are revealed to the transmitter and all the
receivers. 

To send messages $m_1$ and $m_2$ to user~1 and 2, respectively, we use coded time-sharing in phase 1. Specifically, at time $t$, the transmitter
sends $x^{(1)}_t = v_{1,t}$ if $q=1$ and $x^{(1)}_t = v_{2,t}$ if $q=2$. It is similar to a TDMA scheme controlled by the time-sharing variables
$\{q_t\}$. Here the superscript $\cdot^{(1)}$ stands for phase~1. 

At the end of phase~1, each receiver~$k$ observes $y_{k,t}$ that depends on $x_t$ and the channel state $s_t$, for $t=1,\ldots,n_1$. The
transmitter obtains through feedback the sequence $s_1,\ldots,s_{n_1}$. At this point, the transmitter knows the following i.i.d.~triples
\begin{align}
  (v_{1,1}, v_{2,1}, s_1), \ldots, (v_{1,n_1}, v_{2,n_1}, s_{n_1}),  \label{eq:tmp99}
\end{align}%
which can be regarded as a source sequence of length $n_1$. Alternatively, it can be represented by $(\pmb{v}_1, \pmb{v}_2, \pmb{s}^{(1)})$.

\subsection*{\underline{Phase~2}}
The transmitter creates a source codebook and a channel codebook, both with the same size~$2^{n_1 R_{12}}$. Specifically, the source codebook
contains $2^{n_1 R_{12}}$ i.i.d.~sequences $\hat{\pmb{y}}(m_{12})$, each generated according to $\prod_{t=1}^{n_1} p(\hat{y}_t \cond s_t, q_t)$,
whereas the channel codebook contains $2^{n_1 R_{12}}$ sequences ${\pmb{v}_{12}}(m_{12})$, each generated according to $\prod_{t=1}^{n_2}
p(v_{12,t})$. Note that while the codebook size is the same, the codeword lengths are different for the source and channel codebooks. This is
because the source codebook is used to describe the source sequence \eqref{eq:tmp99} from phase~1 while the channel codebook is used to send the
index $m_{12}$ in phase~2. 

First, the transmitter finds a sequence $\hat{\pmb{y}}(m_{12})$ from the source codebook that is jointly typical with the source
sequence~\eqref{eq:tmp99}. This can be done successfully provided that
\begin{align}
  R_{12} \geq  I(\hat{Y};V_{1},V_2 \cond \S{1},Q). \label{eq:tmp1} 
  \end{align}%
Then, the source sequence is associated with the channel codeword ${\pmb{v}_{12}}(m_{12})$ through the
index~$m_{12}$. The transmission in phase~2 is simply specified by $x^{(2)}_t = v_{12,t}$, $t=1,\ldots,n_2$. The above procedure can be seen as a joint source-channel coding. 

\subsection*{\underline{Decoding}}
We focus on the decoding at receiver~$k$ without loss of generality. First, the receiver tries to find out $m_{12}$ with the observations from the
two phases: $\pmb{y}^{(1)}_k$ and $\pmb{y}^{(2)}_k$. Intuitively, $\pmb{y}^{(1)}$ is correlated with the source $(\pmb{v}_1, \pmb{v}_2,
\pmb{s}^{(1)})$ and thus the source codeword $\hat{\pmb{y}}(m_{12})$, whereas $\pmb{y}^{(2)}_k$ is correlated with the channel codeword
${\pmb{v}_{12}}(m_{12})$. Hence, both observations can help find the \emph{same} index $m_{12}$. Specifically,  
decoder~$k$ looks for $\hat{m}_{12}$ such that $\bigl(\pmb{v}_{12}(\hat{m}_{12}), \vecy{2}{k}, \vecs{2} \bigr)$ are jointly typical and that
$\bigl(\hat{\pmb{y}}(\hat{m}_{12}),\vecy{1}{k},\vecs{1},\pmb{q}\bigr)$ are jointly typical. 
It turns out that one can recover $m_{12}$ correctly as long as the rate satisfies 
\begin{align}
  n_1 R_{12}  \leq   n_{1} I(\hat{Y};
  \Y{1}{k} \cond \S{1}, Q) + n_2 I(V_{12}; \Y{2}{k} \cond \S{2}),\label{ch-2user:eq:decodingcondition}
\end{align}
where we see clearly the contribution of the observations from both phases. This is essentially Tuncel's scheme~\cite{tuncel2006slepian} of separate source-channel encoding but joint source-channel decoding. 

Then the receiver uses $\hat{\pmb{y}}(\hat{m}_{12})$ jointly with the observation from phase~1, $\pmb{y}^{(1)}$, to decode the original message.
Specifically, it looks for the unique $\hat{m}_k$ such that $(\pmb{v}_k(\hat{m}_k), \pmb{y}_k^{(1)}, \hat{\pmb{y}}(\hat{m}_{12}), \pmb{q},
\pmb{s}^{(1)})$ is jointly typical. The original message can be decoded correctly if the message rate satisfies  
   \begin{align}
     n R_{k} \leq n_1 I(V_{k}; \Y{1}{k}, \hat{Y}\cond \S{1}, Q). \label{eq:tmp3}
  \end{align}

  From \eqref{eq:tmp1}-\eqref{eq:tmp3}, we see that for any fixed distribution $p(v_1)p(v_2)p(v_{12})p(\hat{y}\cond v_1, v_2, s^{(1)}, q)$ 
  the rate pair $(R_1,R_2)$ is achievable if, for $k\in\left\{ 1,2 \right\}$, 
  \begin{align}
    R_{k} &\leq \alpha_1 I(V_{k}; \Y{1}{k}, \hat{Y}\cond \S{1}, Q), \label{eq:rk} \\
I(\hat{Y};V_{1},V_2 \cond \S{1},Q) &\le I(\hat{Y}; \Y{1}{k} \cond \S{1}, Q) + \frac{\alpha_2}{\alpha_1} I(V_{12}; \Y{2}{k} \cond \S{2}), \label{eq:tradeoff}
  \end{align}%
  where $\alpha_1 \defeq \frac{n_1}{n}$ and $\alpha_2\defeq \frac{n_2}{n}$ with $\alpha_1+\alpha_2=1$ can be optimized.

  \begin{remark}[Comparison to other information-theoretic schemes]
  
  Although in the two-user case our result is closely connected to the works \cite{shayevitz2013capacity} and \cite{venkataramanan2013achievable}, unfortunately it is hard to make a fair comparison in the Gaussian case. 
  First, the fact that the achievable regions depend on different sets of pmf's prohibits the analytical comparison. Then, since we cannot find the
  exact optimal solution for any of the regions~(e.g. Gaussian input is not even proved to be optimal in general), any numerical comparison must
  be based on a particular choice of distribution, which cannot be conclusive. Indeed, the underlying transmission schemes are conceptually
  different. Both schemes in \cite{shayevitz2013capacity} and \cite{venkataramanan2013achievable} use Marton coding and block-Markov coding. While
  a separate source channel coding was used to compress the side information in \cite{shayevitz2013capacity}, the
  authors in \cite{venkataramanan2013achievable}
  adopted a joint source-channel coding approach. In our scheme, we do not use Marton coding nor block-Markov
  coding, but, as in
  \cite{venkataramanan2013achievable}, we use a joint source-channel coding for the transmission of side
  information. It is worth noting that our scheme is based on
  Tuncel's scheme \cite{tuncel2006slepian} and is different from the one used in \cite{venkataramanan2013achievable}. 
 
  From the complexity perspective, our scheme is conceptually simpler since neither block-Markov coding nor binning is required. Furthermore, due to the relative simplicity, we
  manage to derive a rate region for the $K$-user case with a reasonable number of parameters, as will be shown in the upcoming sections. Such
  an advantage allows us to easily obtain numerical results for $K \ge 3$, whereas the counterpart of the existing
  two-user schemes is still missing in the literature due to their complexity for extension.
\end{remark}

  \subsection{Application to the fading GBC}

  Let us consider a two-user MISO fading GBC with $Y_k^{(j)} = \rvVec{H}_k^{(j)H} \rvVec{X}^{(j)} + Z_k^{(j)}$, for
  $k\in\left\{ 1,2 \right\}$ and $j\in\left\{ 1,2 \right\}$. Here $\rvVec{H}_k^{(j)H}$ is the channel
  $\rvMat{H}_k^{(j)}$ in the vector case. We consider the symmetric case and let $Q \in \left\{ 1,2 \right\}$ be uniform with probability~$\frac{1}{2}$ for
  each user. In phase~1, we have $\rvVec{X}^{(1)} = \rvVec{V}_1$ if $Q=1$ and $\rvVec{X}^{(1)} = \rvVec{V}_2$ if $Q=2$, with
  $\rvVec{V}_1,\rvVec{V}_2 \sim \mathcal{CN}(0, \frac{P}{2} \Id_2)$ being independent. At the end of phase~1, the transmitter sets the side information as follows
  \begin{align}
    \hat{Y} &= \begin{cases} \rvVec{H}_2^{(1)H} \rvVec{X}^{(1)} + \hat{Z}_2 = \rvVec{H}_2^{(1)H} \rvVec{V}_1 + \hat{Z}_2, & \text{if } Q = 1, \\
    \rvVec{H}_1^{(1)H} \rvVec{X}^{(1)} + \hat{Z}_1 = \rvVec{H}_1^{(1)H} \rvVec{V}_2 + \hat{Z}_1, & \text{if }  Q = 2. 
    \end{cases} \label{eq:Yhat}
  \end{align}%
  Intuitively, $\hat{Y}$ is the compression of the \emph{overheard} signal at the \emph{unintended} receiver, with
  $\hat{Z}_1, \hat{Z}_2\sim\mathcal{CN}(0,\hat{\sigma}^2)$ being the independent compression noises. 
  The idea is not to retransmit everything about $(\rvVec{V}_1,
  \rvVec{V}_2, \rvMat{H})$ as side information, since this would be too costly. Instead, sending compressed version of a function of these
  information, namely, $(\rvVec{H}_2^{(1)H} \rvVec{V}_1,  \rvVec{H}_1^{(1)H} \rvVec{V}_2)$, would be helpful. The compression noise
  can balance the precision of side information and the cost for the transmission, as can be observed in \eqref{eq:tradeoff}. To get more insight
  on the choice of $\hat{Y}$, let us rewrite \eqref{eq:tradeoff} as 
  \begin{align}
    I(\hat{Y};V_{1},V_2 \cond Y_k^{(1)}, \S{1},Q) &\le \frac{\alpha_2}{\alpha_1} I(V_{12}; \Y{2}{k} \cond \S{2}),
\label{eq:tradeoff2}
  \end{align}%
  using the Markovity $\hat{Y} \leftrightarrow (V_{1},V_{2}, \S{1},Q) \leftrightarrow  Y_k^{(1)}$ and the chain rule. The left-hand side of
  \eqref{eq:tradeoff2} is the average amount of side information remaining in $\hat{Y}$ after observing $Y_k^{(1)}$, whereas the right hand side
  is the achievable transmission rate to user~$k$ in phase~$2$. With the choice~\eqref{eq:Yhat}, on the one hand, we make sure that during half of
  the time, $I(\hat{Y};V_{1},V_2 \cond Y_k^{(1)}, \S{1},Q = q)$ is small since $Y_k^{(1)}$ already almost contains the information in $\hat{Y}$.
  This makes sure that the constraint \eqref{eq:tradeoff2} can be met. On the other hand, from \eqref{eq:Yhat} and \eqref{eq:rk}, we notice
  that $\hat{Y}$ provides to receiver~$k$ an extra observation approximative to the other receiver's signal. Such
  observation helps create a virtual MIMO system for
  each user. From \eqref{eq:rk}, it readily follows that the symmetric rate is
  \begin{align}
    R_\text{sym}^{\text{JSC}} &= \frac{\alpha_1}{2}\, \mathbb{E} \left[ \log \det \left(\Id + \left[ \begin{smallmatrix}
      \sigma^{-2} & 0 \\ 0 & \hat{\sigma}^{-2} \end{smallmatrix}\right] \frac{P}{2}
      \rvMat{H} \rvMat{H}^H \right) \right], \label{eq:rsym}
  \end{align}%
  where $\alpha_1$ should be chosen to satisfy \eqref{eq:tradeoff2} with equality,\footnote{If
  the inequality \eqref{eq:tradeoff2} is strict, then one can always increase $\alpha_1$ to achieves
  equality. This is without loss of optimality since increasing $\alpha_1$ only increases the
  symmetric rate.} that is,
  \begin{align}
    \frac{1}{2}\sum_{l=1}^2 \mathbb{E} \left[ \log\left( 1 + \frac{P}{2\hat{\sigma}^2} \rvVec{H}_l
    \left(1+\frac{P}{2\sigma^2} \rvVec{H}_k^{H}\rvVec{H}_k\right)^{-1}  \rvVec{H}_l^{H}
    \right)\right]  &= \frac{\alpha_2}{\alpha_1} \mathbb{E} \left[ \log\left( 1 + \frac{P}{2\sigma^2}\|\rvVec{H}_k
    \|^2   \right) \right].
  \end{align}%
  Combining the above equation with $\alpha_1+\alpha_2 = 1$, we obtain the solution
  \begin{align}
    \alpha_1 = \frac{\mathbb{E} \left[ \log\left( 1 + \frac{P}{2\sigma^2}\|\rvVec{H}_k
    \|^2   \right) \right] }{\mathbb{E} \left[ \log\left( 1 + \frac{P}{2\sigma^2}\|\rvVec{H}_k
    \|^2   \right) \right] + 
\frac{1}{2}\sum_{l=1}^2 \mathbb{E} \left[ \log\left( 1 + \frac{P}{2\hat{\sigma}^2} \rvVec{H}_l
    \left(1+\frac{P}{2\sigma^2} \rvVec{H}_k^{H}\rvVec{H}_k\right)^{-1}  \rvVec{H}_l^{H}
    \right) \right] }. \label{eq:alpha}  
  \end{align}%

  \subsection*{Comparison to the MAT scheme}

  The proposed scheme can be compared to the original 
  two-user MAT scheme which also works in two phases. There are three slots in total: two slots in
  phase~1 and one slot in phase~2. First, $\rvVec{V}_1$ and $\rvVec{V}_2$ are sent in slot~1
  and~2, respectively, in a TDMA fashion. At the end of phase~1, the transmitter receives the CSI
  feedback and linearly combines the \emph{overheard} signal from phase~1 as $L \defeq \rvVec{H}_2^H
  \rvVec{V}_1 + \rvVec{H}_1^H \rvVec{V}_2$. In phase~2, the symbol $L$ is scaled, e.g., to
  $\frac{1}{\sqrt{2}} L$ if the average transmit power constraint is imposed and if the channel coefficients are
  normalized. Then, the scaled signal is transmitted in one
  slot using one antenna. At the end, user~1 receives the noisy versions $\rvVec{H}_1^H
  \rvVec{V}_1 + Z_{11}$, $\rvVec{H}_1^H \rvVec{V}_2 + Z_{12}$, and $\frac{H_{11}'}{\sqrt{2}}(\rvVec{H}_2^H
  \rvVec{V}_1 + \rvVec{H}_1^H \rvVec{V}_2) + Z_{13}$
  in three slots, where $H_{11}'$ is the channel from the first antenna to user~1 at slot~3. From the three observations, receiver~1 gets the following virtual MIMO output
  \begin{align}
    Y_{11}' &=  \rvVec{H}_1^H \rvVec{V}_1 + Z_{11} \\
    Y_{12}' &=  \frac{1}{\sqrt{2}} H_{11}' \rvVec{H}_2^H \rvVec{V}_1 - \frac{1}{\sqrt{2}} H_{11}' Z_{12} + Z_{13}, 
  \end{align}%
  Due to the symmetry, receiver~2 has the similar form on $\rvVec{V}_2$. Finally, we conclude that
  the symmetric MAT rate is
  \begin{align}
    R_{\text{sym}}^{\text{MAT}} &= \frac{1}{3} \mathbb{E} \left[ \log \det \left(\Id + \left[ \begin{smallmatrix}
      \sigma^{-2} & 0 \\ 0 & \tilde{\sigma}_H^{-2} \end{smallmatrix}\right] \frac{P}{2}
      \rvMat{H} \rvMat{H}^H \right) \right], \label{eq:rsym-mat}
  \end{align}%
  where $\tilde{\sigma}_H^2 \defeq {\sigma}^2 (1 + \frac{2}{|H'_{11}|^2})$. Comparing \eqref{eq:rsym} and \eqref{eq:rsym-mat}, we notice the
  similarity of the rate expressions. Indeed, from \eqref{eq:alpha}, we see that if we set $\hat{\sigma}^2 =
  \sigma^2$, then $\alpha_1
  \xrightarrow{P\to\infty} \frac{2}{3}$ which implies that both the MAT and the JSC schemes have the same prelog
  factor and thus the same DoF. However,
  the noise covariance inside the determinant is different in \eqref{eq:rsym} and \eqref{eq:rsym-mat} since $\tilde{\sigma}_H^2$ is almost triple
  of $\hat{\sigma}^2$ if we approximate $|H_{11}'|^2$ by $1$. Although the above comparison is not precise, it provides an idea that the power gain
  of the JSC scheme over the MAT scheme at high SNR is mainly due
  to the fact that the linear operations in MAT cause cumulation of noises from different phases. Intuitively, it is analogous to the advantage of
  compress-forward like schemes over amplify-forward like schemes in relay channels. At finite SNR, the proposed JSC scheme also provides 
  the flexibility of choosing an appropriate compression noise variance $\hat{\sigma}^2$ as a function of $P$. Obviously, this flexibility
  requires that $\alpha_1$ in \eqref{eq:alpha} can be changed accordingly. In other words, with the JSC scheme one can adjust the length of the
  phases to achieve better performance, which is essential at finite SNR. Such flexibility is not possible with the MAT scheme since the length of
  each phase is fixed. Therefore, although the MAT scheme is DoF optimal, it may suffer from rate loss at finite
  SNR. More comments on the differences between the JSC scheme and the MAT-like schemes shall be made in
  Section~\ref{ch-Kuser:sec:Simulation}.


\section{The General Case with $K$ Users}\label{ch-Kuser:main result}
In this section, we describe the general JSC scheme for the $K$-user BC with state feedback. The rate region is given in the following main result of
this paper. 

\begin{theorem}\label{ch-Kuser:IBGBC} 
  A rate tuple $(R_1,\ldots,R_K)$
  is achievable in the $K$-user BC with causal state
  feedback if  
  \begin{align} 
    {R}_{k} &\leq \alpha_1 I(V_{k}; \Y{1}{k},
    \{\Yhat{1}{\Ucal}\}_{\Ucal \ni k} \cond \S{1},
    \Q{1}),\label{ch-Kuser:eq:IBofGBC2} \\ 
    0 &\le \min_{i,j,k,\Jcal:\atop i<j, k\in \Jcal} \Bigl\{ \alpha_j
    I(\V{i}{\Jcal};\Y{j}{k},\{\Yhat{j}{\Ucal}\}_{ \Ucal \supset \Jcal}
    \cond \S{j}, \Q{j})  - \alpha_i
    I(\{V_{\Ical}\}_{\Ical \subset \Jcal}; \Yhat{i}{\Jcal} \cond \Y{i}{k},
    \S{i}, \Q{i})  \Bigr\}.  \label{ch-Kuser:eq:IBofGBC1}
  \end{align} for some $K$-tuple
  $(\alpha_1,\ldots,\alpha_K)\in\mathbb{R}^K_+$ with $\sum_k \alpha_k =
  1$, and some pmf~\footnote{We define $v_{\Ucal} \defeq \left\{
    \v{k}{\Ucal}:\ k<|\Ucal| \right\}$ and $V_{\Ucal} \defeq \left\{
    \V{k}{\Ucal}:\ k<|\Ucal| \right\}$ for brevity. We also recall that $\Ical$ and $\Jcal$ are subsets of size $i$ and $j$, respectively.  
    }
  \begin{align}
      \left( \prod_{j=1}^K p(\x{j} \cond \{v_{\Jcal}\}_{\Jcal}, \q{j}) \right) \prod_{k=1}^K p(v_k)\prod_{j=2}^K 
      \prod_{\Jcal} \prod_{i=1}^{j-1} p(\v{i}{\Jcal}) p(\yhat{i}{\Jcal} \cond \{
      v_{\Ical}\}_{\Ical\subset \Jcal}, \s{i}, \q{i}),  \label{ch-Kuser:eq:pmf1} 
    \end{align}%
    \end{theorem}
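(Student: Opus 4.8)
The plan is to establish Theorem~\ref{ch-Kuser:IBGBC} by a random-coding achievability argument that lifts the two-user construction \eqref{eq:tmp1}--\eqref{eq:tmp3} to a $K$-phase scheme indexed by subsets of users, phase~$j$ having blocklength $n_j=\alpha_j n$. \emph{Codebook generation.} For each phase $j$ I would draw a time-sharing sequence $\vecq{j}$; for each user $k$ a set of $2^{nR_k}$ i.i.d.\ codewords $\pmb{v}_k(m_k)$; and for every $\Jcal$ with $|\Jcal|\ge 2$ and every $i<|\Jcal|$ a channel codebook $\{\vecv{i}{\Jcal}(\m{i}{\Jcal})\}$ together with a conditional source codebook $\{\vecyhat{i}{\Jcal}(\m{i}{\Jcal})\}$ of the \emph{same} cardinality, the latter with codewords generated according to $\prod_t p(\yhat{i}{\Jcal}\cond \{v_\Ical\}_{\Ical\subset\Jcal},s_i,q_i)$ --- all consistent with the factorization \eqref{ch-Kuser:eq:pmf1}. \emph{Encoding.} Phase~$1$ is coded time-sharing: $\vecx{1}$ is produced symbol-by-symbol from $\{\pmb{v}_k(m_k)\}_k$ and $\vecq{1}$ via $p(\x{1}\cond\cdot)$. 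At the end of phase~$i$ the transmitter has the feedback $\vecs{i}$ and hence the ``source'' $\{v_\Ical\}_{\Ical\subset\Jcal}$ (with $\vecs{i},\vecq{i}$ shared with the decoders); for each $\Jcal$ with $|\Jcal|>i$ it picks, by the covering lemma, an index $\m{i}{\Jcal}$ such that $\vecyhat{i}{\Jcal}(\m{i}{\Jcal})$ is jointly typical with that source, which succeeds as long as the source-codebook rate exceeds $I(\Yhat{i}{\Jcal};\{V_\Ical\}_{\Ical\subset\Jcal}\cond\S{i},\Q{i})$. The associated channel codeword $\vecv{i}{\Jcal}(\m{i}{\Jcal})$ is transmitted in a later phase --- the one indexed by $j$ in \eqref{ch-Kuser:eq:IBofGBC1} --- multiplexed with the other groups' codewords through that phase's time-sharing sequence $\vecq{j}$ via $p(\x{j}\cond\{v_\Jcal\}_\Jcal,\q{j})$, so that the conditional mutual information $I(\V{i}{\Jcal};\cdot\cond\S{j},\Q{j})$ already carries the right time-fraction.

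\emph{Decoding.} Receiver~$k$ decodes by decreasing group size, so that when it reaches a group $\Jcal\ni k$ it has already recovered --- while processing the strictly larger groups --- every side-information codeword $\Yhat{j}{\Ucal}$ with $\Ucal\supsetneq\Jcal$. It then recovers each compression index $\m{i}{\Jcal}$ in the spirit of Tuncel~\cite{tuncel2006slepian}: through joint typicality of $\vecyhat{i}{\Jcal}$ with the phase-$i$ observation $\vecy{i}{k}$ at the source level, and through joint typicality of $\vecv{i}{\Jcal}$ with the phase-$j$ observation $\vecy{j}{k}$ augmented by the already-recovered $\{\vecyhat{j}{\Ucal}\}_{\Ucal\supset\Jcal}$ at the channel level; a unique index passes both tests provided $n_i R'\le n_i I(\Yhat{i}{\Jcal};\Y{i}{k}\cond\S{i},\Q{i})+n_j I(\V{i}{\Jcal};\Y{j}{k},\{\Yhat{j}{\Ucal}\}_{\Ucal\supset\Jcal}\cond\S{j},\Q{j})$, where $2^{n_i R'}$ is the common codebook size. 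Combining this with the covering requirement $R'\ge I(\Yhat{i}{\Jcal};\{V_\Ical\}_{\Ical\subset\Jcal}\cond\S{i},\Q{i})$, dividing through by $n$, and using the Markov relation $\Yhat{i}{\Jcal}\leftrightarrow(\{V_\Ical\}_{\Ical\subset\Jcal},\S{i},\Q{i})\leftrightarrow\Y{i}{k}$ forced by \eqref{ch-Kuser:eq:pmf1} --- under which $I(\Yhat{i}{\Jcal};\{V_\Ical\}_{\Ical\subset\Jcal}\cond\S{i},\Q{i})-I(\Yhat{i}{\Jcal};\Y{i}{k}\cond\S{i},\Q{i})=I(\{V_\Ical\}_{\Ical\subset\Jcal};\Yhat{i}{\Jcal}\cond\Y{i}{k},\S{i},\Q{i})$ --- reproduces exactly the per-$(i,j,k,\Jcal)$ inequality of \eqref{ch-Kuser:eq:IBofGBC1}. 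Finally, receiver~$k$ decodes $M_k$ from $(\vecy{1}{k},\{\vecyhat{1}{\Ucal}\}_{\Ucal\ni k},\vecs{1},\vecq{1})$ by joint typicality, which succeeds when $R_k<\alpha_1 I(V_k;\Y{1}{k},\{\Yhat{1}{\Ucal}\}_{\Ucal\ni k}\cond\S{1},\Q{1})$, i.e.\ \eqref{ch-Kuser:eq:IBofGBC2}.

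\emph{Error analysis and the main obstacle.} The overall error probability is bounded by a union over the covering events at the encoder, the index-decoding events, and the $K$ message-decoding events, each controlled by the standard covering, packing, and conditional-typicality lemmas once the conditional-independence relations implied by \eqref{ch-Kuser:eq:pmf1} are verified. The induction underlying the decoding order is well founded because the side-information codewords a receiver uses always belong to strictly larger groups, so no circular dependency arises and the error events can be ordered so the union bound closes. I expect the main difficulty to lie precisely in this bookkeeping: (i)~checking that, at the moment receiver~$k$ decodes $\m{i}{\Jcal}$, every $\Yhat{j}{\Ucal}$ with $\Ucal\supsetneq\Jcal$ appearing in its decoding rule has already been correctly recovered, and sequencing the error events accordingly; (ii)~making the two-sided (source-plus-channel) joint-typicality test rigorous, so that the additive rate budget $n_i I(\Yhat{i}{\Jcal};\Y{i}{k}\cond\cdot)+n_j I(\V{i}{\Jcal};\cdot)$ --- the $K$-user analogue of \eqref{ch-2user:eq:decodingcondition} --- is genuinely achievable with vanishing error; and (iii)~tracking all the conditional mutual-information terms through the Markov chains so that each collapses to the stated form. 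Once these are settled, letting $n\to\infty$ and taking the standard closure from strict to non-strict inequalities would establish achievability of every rate tuple satisfying \eqref{ch-Kuser:eq:IBofGBC2}--\eqref{ch-Kuser:eq:IBofGBC1} for the given parameters, hence of the whole region, completing the proof.
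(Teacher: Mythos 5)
Your proposal is essentially the paper's own proof: $K$ phases with coded time-sharing, a covering step at rate $I(\Yhat{i}{\Jcal};\{V_{\Ical}\}_{\Ical\subset\Jcal}\cond\S{i},\Q{i})$, Tuncel-style recovery of each compression index through simultaneous source-level (phase~$i$) and channel-level (phase~$j$) typicality tests with the already-decoded $\{\Yhat{j}{\Ucal}\}_{\Ucal\supset\Jcal}$, backward decoding over decreasing group size, and the Markov identity that converts the combined rate condition into \eqref{ch-Kuser:eq:IBofGBC1}, with the final packing step giving \eqref{ch-Kuser:eq:IBofGBC2}. One slip should be corrected: the source codebook must be generated i.i.d.\ according to $\prod_t p(\yhat{i}{\Jcal,t}\cond \s{i}_t,\q{i}_t)$, as the paper does, not according to $p(\yhat{i}{\Jcal}\cond\{v_{\Ical}\}_{\Ical\subset\Jcal},\s{i},\q{i})$ as you wrote; a codebook conditioned on the realized $v$-sequences would depend on the still-undecoded messages, so the receivers could not construct it, and the covering requirement you invoke is exactly the one associated with generation from the conditional marginal given only $(\s{i},\q{i})$. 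Since your covering bound and your source-level typicality test already presuppose that correct generation, this is a mis-statement rather than a genuinely different route — with it fixed, your argument coincides with the paper's.
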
 
  If we let $K = 2$ and $Q^{(2)}$ be deterministic, and identify $(V_{1\to{12}}$, $\hat{Y}_{1\to{12}})$ with
  $(V_{{12}},\hat{Y})$, then we recover
  the results \eqref{eq:rk} and \eqref{eq:tradeoff2} in the two-user case. In order to have a general scheme for the $K$-user case, however, we
  need a much heavier notation as shown in the above theorem. As it will become clearer later, the complexity originates from the need to
  introduce $K$ phases in each one of which different types of information are created and sent. In particular, as compared to the two-user case,
  there are in general a coded time-sharing random variable~(RV) $\Q{j}$ in each phase~$j\in\Kcal$. The subscript ``$i\to \Jcal$'' can be understood as related to
  the side information created in phase~$i$ and intended for the users in group $\Jcal$ of $j$ users. We recall that in the two-user case, there
  is only one type of side information that is intended for both users~($|\Jcal| = 2)$. 


In the rest of the section, we present the general JSC scheme in detail and prove the achievability of the rate region given by Theorem~\ref{ch-Kuser:IBGBC}. We divide the $n$-slot transmission into $K$ phases, each phase $j\in\Kcal$ having length $n_j$ such that $n = n_1+\cdots+n_K$.  
We define the normalized length of phase $j$ as $\alpha_j\defeq\frac{n_j}{n}$ with $\sum_{j=1}^K \alpha_j=1$. 

The $K$-user scheme works in a similar manner as the two-user scheme. In each phase~$j\in\Kcal$, 
\begin{itemize}
  \item the input, output, and state are denoted by ${X}^{(j)}$, ${Y}^{(j)}$, and $S^{(j)}$, respectively;
  \item if $j=1$, the original messages $\{M_k \in \Mcal_{k} \defeq [1:2^{n R_k}]:\ k\in\Kcal\}$ are sent, otherwise a set of side
    information messages $\{\M{i}{\Jcal} \in \Mcal_{i\shortrightarrow\Jcal}\defeq [1:2^{n_i \R{i}{\Jcal}}]:\ |\Jcal|=j,\, i<j \}$ are sent;
  \item each message $\M{i}{\Jcal}$ is related to the source RV $\hat{Y}_{i\shortrightarrow\Jcal}$ created in a previous phase~$i$, and is carried
    by $V_{i\shortrightarrow\Jcal}$ for transmission;
  \item the transmission is controlled by a coded time-sharing RV $\Q{j}$.
\end{itemize}




\subsubsection*{\underline{Codebook Generation}}
Fix the pmf as described in~\eqref{ch-Kuser:eq:pmf1}. 
\begin{enumerate}
  \item Before the beginning of phase $j\in\Kcal$, randomly generate the time-sharing sequence according to $\prod_{t=1}^{n_j} p(\q{j}_t)$.
  \item At the beginning of phase $1$, for each user~$k\in\Kcal$, randomly generate $2^{n R_{k}}$ independent sequences $\pmb{v}_{k}(m_k)$, $m_{k} \in [1:2^{n R_{k}}]$, each according to $  \prod_{t=1}^{n_1} p(v_{k,t})$. 
  \item At the end of phase $i\in \{1,\ldots,K-1\}$, for each $j>i$ and each $\Jcal$ with $|\Jcal|=j$, randomly generate $2^{n_i \R{i}{\Jcal}}$
    independent sequences $\vecyhat{i}{\Jcal}(\m{i}{\Jcal})$ and $2^{n_i \R{i}{\Jcal}}$ independent sequences $\vecv{i}{\Jcal}(\m{i}{\Jcal})$,
    $\m{i}{\Jcal} \in[1:2^{n_i \R{i}{\Jcal}}]$, each according to $ \prod_{t=1}^{n_i} p(\yhat{i}{\Jcal,t} \cond \s{i}_t, \q{i}_t)$ and
    $\prod_{t=1}^{n_j} p(\v{i}{\Jcal,t})$, respectively. 
\end{enumerate}
\subsubsection*{\underline{Encoding}}
\begin{enumerate}
  \item In phase $1$, to send the original messages $(M_1,\ldots,M_K)$, a sequence $\vecx{1}$ is first generated 
    based on ($\pmb{v}_1(M_1), \cdots, \pmb{v}_K(M_K), \vecq{1}$) according to $ \prod_{t=1}^{n_1} p(x^{(1)}\cond v_{1,t},\ldots,v_{K,t}, q^{(1)}_t)$ and then transmitted.
\item At the end of phase $i$, $i = 1,\ldots,K-1$, and for each $j>i$ and each $\Jcal$ with
  $|\Jcal|=j$, given the state feedback of all the previous phases, the source searches for an index $\M{i}{\Jcal}$ such that
  $(\vecyhat{i}{\Jcal}(\M{i}{\Jcal}), \{\vecv{l}{\Ical}\}_{l<i, \Ical \subset \Jcal}, \vecs{i},
  \vecq{i})\in\Tcal^{n_i}_{\epsilon_{n_i}}(\Yhat{i}{\Jcal},\{V_{\Ical}\}_{ \Ical \subset  \Jcal}, S, Q)$\footnote{It is worth clarifying that the \emph{weak} typicality compatible with both discrete and continuous RVs is used. As in \cite{cover2012elements}, weak typicality is defined as $\Tcal^n_{\epsilon}(x^n)\defeq \{x^n:\vert -\frac{1}{n}\log p(x^n)- H(X) \vert \leq \epsilon\}$ for discrete RVs and $\Tcal^n_{\epsilon}(x^n)\defeq \{x^n:\vert -\frac{1}{n}\log p(x^n)- h(X) \vert \leq \epsilon\}$ for continuous RVs, where with a bit abuse of notation we denote pmf and probability density function with the same notation $p(x^n)$ for discrete and continuous cases, respectively.}. According to the covering
  lemma~\cite{elgamal_kim}, this is feasible with probability going to
  $1$ when $n_i\to\infty$, if 
  \begin{align}
   n_i \R{i}{\Jcal} \geq n_i I(\Yhat{i}{\Jcal};\{V_{\Ical}\}_{ \Ical \subset
   \Jcal}\cond \S{i},\Q{i}) + n_i\epsilon_{n_i}. \label{eq:source}
  \end{align}%
\end{enumerate}
\subsubsection*{\underline{Decoding}}
We focus on the decoding procedure of a particular receiver~$k\in\Kcal$ without loss of generality. At the end of phase $K$, a
backward decoding is performed. Specifically, for phase $j=K,K-1,\ldots,1$, the set of messages, 
 $\{M_{\Jcal}\}_{\Jcal\ni k}$, intended for user~$k$ is decode as follows.  
\begin{enumerate}
  \item For phase $j$, $j=K,K-1,\ldots,2$, by construction the ``future'' message set $\{\Mhat{j}{\Ucal}:\ {\Ucal\ni k},
    |\Ucal|>j\}$ has been decoded previously. The goal is to decode the ``current'' messages $\Mhat{i}{\Jcal}$ for each $\Jcal\ni
    k$ and each $i<j$. To that end, the decoder looks for a \emph{unique} index $\Mhat{i}{\Jcal}$ such that the following joint
    typicalities are satisfied \emph{simultaneously}
  \begin{align}
  \bigl(\vecv{i}{\Jcal}(\Mhat{i}{\Jcal}),
  \vecy{j}{k},\{\vecyhat{j}{\Ucal}(\Mhat{j}{\Ucal})\}_{\Ucal\supset \Jcal}, \vecs{j},
  \vecq{j} \bigr) &\in\Tcal^{n_j}_{\epsilon_{n_j}}(\V{i}{\Jcal}, Y_{k}, \{\Yhat{j}{\Ucal}\}_{
  \Ucal \supset \Jcal}, \S{j}, \Q{j}) \nonumber\\
  \bigl(\vecyhat{i}{\Jcal}(\Mhat{i}{\Jcal}),\vecy{i}{k},\vecs{i},\vecq{i}\bigr)
  &\in\Tcal^{n_i}_{\epsilon_{n_i}}(\Yhat{i}{\Jcal}, Y_{k}, \S{i}, \Q{i}).
\end{align}
   The probability that such an index cannot be
   found or is not correct~($\Mhat{i}{\Jcal} \ne \M{i}{\Jcal}$) vanishes when $n_i\to\infty$ provided that
  \begin{align}
    n_i \R{i}{\Jcal}  \leq   n_j I(\V{i}{\Jcal}; \Y{j}{k}\!\!, \{\Yhat{j}{\Ucal}\}_{
    \Ucal \supset \Jcal} \cond \S{j}\!\!\!, \Q{j}) + n_{i} I(\Yhat{i}{\Jcal};
    \Y{i}{k} \cond \S{i}\!\!\!, \Q{i}) - n_j\epsilon_{n_j} - n_i\epsilon_{n_i}.\label{ch-Kuser:eq:decodingcondition}
  \end{align}
  The error event analysis that leads to the above rate follows the exact same steps as the one in~\cite[Sec.~IV,
  p.1476]{tuncel2006slepian}, and is omitted here due to the space limitation.   
  \item Finally, for phase~$1$, the decoder searches for a unique $\hat{M}_{k}$ such that
  \begin{align}
    \bigl(\pmb{v}_{k}(\hat{M}_{k}), \vecy{1}{k},
    \{\vecyhat{1}{\Ucal}(\Mhat{1}{\Ucal})\}_{\Ucal\ni k},\vecs{1},\vecq{1}\bigr)  \in\Tcal^{n_1}_{\epsilon_{n_1}}(V_{k},
    Y_{k}, \{\Yhat{1}{\Ucal}\}_{ \Ucal \ni k}, \S{1}, \Q{1})\nonumber
    \end{align}
   According to the packing lemma, the probability that such an index cannot be
   found or is not correct~($ \hat{M}_k \ne M_k$) vanishes when $n_1\to\infty$ provided that
   \begin{align}
      n R_{k} \leq n_1 I(V_{k}; \Y{1}{k}, \{\Yhat{1}{\Ucal}\}_{ \Ucal \ni k} \cond
      \S{1}, \Q{1}) - n_1\epsilon_{n_1}. \label{ch-Kuser:eq:layeronecon} 
  \end{align}
\end{enumerate}
To summarize, $(R_1,\ldots,R_K)$ is achievable if for each $k$, \eqref{ch-Kuser:eq:layeronecon} is satisfied subject to
the existence of $\left\{ \R{i}{\Jcal}:\ i<j, |\Jcal|=j \right\}$ that verify \eqref{eq:source} and
\eqref{ch-Kuser:eq:decodingcondition}. Thus let $n_1,\ldots,n_K$ go to infinity by keeping the same ratio
$\alpha_1,\ldots,\alpha_n$, we obtain the rate \eqref{ch-Kuser:eq:IBofGBC1} if, for each $(k,i,j,\Jcal)$ with $i<j$ and $|\Jcal|=j$, 
\begin{align}
  I(\Yhat{i}{\Jcal};\{V_{\Ical}\}_{ \Ical \subset \Jcal}\cond \S{i},\Q{i}) \le \frac{\alpha_j}{\alpha_i}I(\V{i}{\Jcal}; \Y{j}{k}\!\!, \{\Yhat{j}{\Ucal}\}_{
    \Ucal \supset \Jcal} \cond \S{j}\!\!\!, \Q{j}) + I(\Yhat{i}{\Jcal};
    \Y{i}{k} \cond \S{i}\!\!\!, \Q{i}). \nonumber
\end{align}%
Using the Markovity $\Yhat{i}{\Jcal} \leftrightarrow (\{V_{\Ical}\}_{\Ical \subset \Jcal}, \S{i},\Q{i}) \leftrightarrow  Y_k^{(i)}$
and the chain rule, we have 
\begin{align}
  I(\Yhat{i}{\Jcal};\{V_{\Ical}\}_{ \Ical \subset \Jcal}\cond \S{i}\!\!\!,\Q{i}) - I(\Yhat{i}{\Jcal}; \Y{i}{k} \cond \S{i}\!\!\!, \Q{i}) &= I(\Yhat{i}{\Jcal};\{V_{\Ical}\}_{ \Ical \subset \Jcal}\cond \Y{i}{k}\!\!\!,\S{i}\!\!\!,\Q{i})
\end{align}%
which leads to the constraint \eqref{ch-Kuser:eq:IBofGBC1}. This completes of proof of Theorem~\ref{ch-Kuser:IBGBC}.

\section{Applications to the GBC and EBC}
\label{sec:application}

In this section, we apply the general result in Theorem~\ref{ch-Kuser:IBGBC} to the fading Gaussian BC and the
erasure BC. The key
is to fix the distribution \eqref{ch-Kuser:eq:pmf1} of the RVs involved in the rate region appropriately for each channel. 

In particular, the coded time-sharing random variable $\Q{j}$, $j\in\Kcal$, is used to indicate which of the messages $\M{i}{\Jcal}$
is to be sent.\footnote{Slightly abusing the subscript notation, we sometimes write $(\cdot)_k$ as $(\cdot)_{0\to\{k\}}$, e.g.,
the original message $M_k$ is also $\M{0}{\{k\}}$ }
Thus, it is natural to define $\Q{j}$ as $(\Q{j}_1, \Q{j}_2)$, where $\Q{j}_1\in \Qcal^{(j)}_1$ and $\Q{j}_2\in \Qcal^{(j)}_2$ 
with
\begin{align}
  \Qcal^{(j)}_1 &\defeq \begin{cases} 
    \{0\}, & j=1, \\
    \left\{ 1,\ldots,j-1 \right\}, & j>1,
  \end{cases}
  \quad \text{and} \quad
\Qcal^{(j)}_2 \defeq\left\{ \Jcal:\ |\Jcal|=j \right\}.
\end{align}%
We also let the channel input $X$ be a deterministic function of $V$ and $Q$. Specifically, when $\Q{j} = (i, \Jcal)$, 
the message $\M{i}{\Jcal}$ is carried by $\V{i}{\Jcal}$. Hence, we set
\begin{align}
X^{(j)} = \V{\Q{j}_1}{\Q{j}_2}. 
\end{align}%


\subsection{Fading Gaussian BC}
\label{ch-Kuser:sec:RsymN}

For the fading GBC, we focus on the symmetric channel and the corresponding symmetric rate for simplicity.  
To that end, we make the following choices on the RVs:
\begin{itemize}
  \item Time-sharing RVs. We let $\Q{j}_1$ be
    deterministic, namely, $\Q{j}_1 = j-1$, and let $\Q{j}_2$ be uniformly distributed over $\Qcal^{(j)}_2$, namely  
    \begin{align}
      \Pr\left( \Q{j}_2 = \Jcal \right) = \binom{K}{j}^{-1}, \quad \forall\, \Jcal \in \Qcal^{(j)}_2. 
      \label{ch-Kuser:eq:TS3}
    \end{align}%
    Intuitively, the above choice means that in phase $j$, we only transmit side information created in the previous phase $j-1$.
    This is similar to the general MAT scheme~\cite{maddah2012completely}. We can write $\Q{j} = (j-1,\Q{j}_2)$.
    The uniformity is simply due to the symmetry of the setting. 
  \item Gaussian distributed $V$'s. The RVs $V$'s with different subscripts are independent and identically distributed~(i.i.d.)~according
    to~$\mathcal{CN}(\pmb{0},\frac{P}{\nt}\IM_{\nt})$. It means that all the transmit antennas are used in each phase with
    isotropic signaling, which can be justified by the lack of instantaneous CSIT. 
  \item Side information $\hat{Y}$ as compression of the overheard signal. In phase~$i$, when $\Q{i} = (i-1, \Ical)$ for some
    $\Ical$, we set 
    \begin{align}
      \rvVecYhat{i}{\Jcal} &= \begin{cases} \rvMat{H}_{\Jcal\setminus\Ical} \rvVecX{i} +
        \hat{\rvVec{Z}}_{\Jcal\setminus\Ical}, & \text{if } \Jcal\supset\Ical \text{ and } |\Jcal|=i+1, \\
        0, & \text{otherwise}
      \end{cases}\label{eq:Yhat-GBC}
    \end{align}%
    where $\hat{\rvVec{Z}}_{\Jcal\setminus\Ical}\sim\mathcal{CN}(\pmb{0},\beta_i\sigma^2\IM)$ is the compression noise with
    $\beta_i>0$ being a parameter to be fixed later. The intuition behind \eqref{eq:Yhat-GBC} is the following. When $\Q{i} = (i-1,
    \Ical)$, the information intended for the users in the set $\Ical$ is being sent and is overheard by some user $k\not\in\Ical$.
    Let $\Jcal = \{k\}\cup\Ical$ be the new group. Then, the overheard signal $\rvMat{H}_{\Jcal\setminus\Ical} \rvVecX{i}$ is
    indeed interested by the users in group $\Ical$ since it provides an extra observation\footnote{When
    $\nr{1}+\cdots+\nr{K} \le \nt$, such
    observation is linearly independent of what each user in $\Ical$ already has.}. Furthermore, thanks to the joint source-channel coding, $\rvMat{H}_{\Jcal\setminus\Ical}
    \rvVecX{i}$ as a side information does not cost receiver~$k$ much to decode since it already has some noisy
    version of the information. 
\end{itemize}
Applying the above RVs to the general region in Theorem~\ref{ch-Kuser:IBGBC}, we obtain the following corollary. Some intermediate
steps are rather technical and deferred to Appendix~\ref{ch-Kuser:appendix:Gaussianregion}. 
\begin{corollary}\label{CH-KUSER:COL:GAU}
  For the $K$-user symmetric $\nt\times n_{\text{r}}$ fading GBC, the symmetric rate:
\begin{align}
  R_{\text{sym}}^{\text{GBC}} &= \max_{(\beta_1,\ldots,\beta_{K-1})\in\mathbb{R}_+^{K-1}} \left(K+ \sum_{j=2}^K \binom{K}{j} \prod_{t=2}^j
  \frac{\sum_{l\le t} b_{l,t} }{ a_{t}} \right)^{-1} a_1, \label{ch-Kuser:eq:RsymN}
\end{align}%
is achievable, where, for $t=1,\ldots,K$, 
\begin{align}
  a_{t} &\defeq \mathbb{E} \left[ \log\det\left( \IM + \snr \, \rvMat{H}_{\Tcal}^H
  \pmb{\Lambda}_t \rvMat{H}_{\Tcal} \right) \right] , \label{ch-Kuser:eq:a_t}\\
  b_{l,t} &\defeq \mathbb{E} \left[ \log\det\left( \IM + \frac{\snr}{\beta_{t-1}} \rvMat{H}_l
  (\IM +  {\snr}\, \rvMat{H}_1^H \rvMat{H}_1 )^{-1} \rvMat{H}_l^H
  \right) \right], \label{ch-Kuser:eq:b_t}
\end{align}%
with $\Tcal\defeq \{1\}\cup \{ t+1,\ldots,K \}$ and $\pmb{\Lambda}_t \defeq
\mathrm{diag}\bigl\{\IM_{n_{\text{r}}}, \beta_{t}^{-1} \IM_{(K-t)n_{\text{r}}}
\bigr\}$.
\end{corollary}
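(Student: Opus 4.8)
The plan is to specialize the general region of Theorem~\ref{ch-Kuser:IBGBC} to the pmf fixed above — coded time-sharing with $\Q{j}_1=j-1$ deterministic and $\Q{j}_2$ uniform over $\Qcal^{(j)}_2$, i.i.d.\ Gaussian $V$'s, and $\rvVecYhat{i}{\Jcal}$ as in \eqref{eq:Yhat-GBC} — and then to maximize the resulting region over the phase lengths $(\alpha_1,\ldots,\alpha_K)$ and the compression-noise parameters $(\beta_1,\ldots,\beta_{K-1})$. The first step is structural: since $\Q{j}_1=j-1$ is deterministic, the input in phase~$j$ is always $\V{j-1}{\Q{j}_2}$, so $\rvVecYhat{i}{\Jcal}$ is nonzero only when $|\Jcal|=i+1$. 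Hence in \eqref{ch-Kuser:eq:IBofGBC1} the subtracted mutual information vanishes whenever $j\neq i+1$, so those brackets are nonnegative for free, and only the ``adjacent'' constraints with $j=i+1$ are binding. Moreover, by the symmetry of the i.i.d.\ fading and of the construction, the $\min$ over $\Jcal$ of size $j$ and over $k\in\Jcal$ is attained at any representative.

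Next I would evaluate the mutual informations term by term. For the rate bound \eqref{ch-Kuser:eq:IBofGBC2}, only $\Q{1}_2=\{k\}$ (probability $1/K$) contributes, and given it receiver~$k$ observes $\V{0}{\{k\}}$ through the $K$-block virtual MIMO channel formed by $\rvMat{H}_k$ (noise $\sigma^2$) together with $\{\rvMat{H}_{k'}\}_{k'\neq k}$ (noise $\beta_1\sigma^2$ each); the Gaussian $\log\det$ formula gives $I=\tfrac1K a_1$, hence $R_{\text{sym}}\le\tfrac{\alpha_1}{K}a_1$. For the adjacent constraint with parameters $(k,i,j=i+1,\Jcal)$, $|\Jcal|=i+1$, $k\in\Jcal$: on the left only $\Q{i+1}_2=\Jcal$ (probability $1/\binom{K}{i+1}$) contributes, and given it receiver~$k$ observes $\V{i}{\Jcal}$ through $\rvMat{H}_k$ (noise $\sigma^2$) together with $\{\rvMat{H}_u\}_{u\notin\Jcal}$ (noise $\beta_{i+1}\sigma^2$), which after relabeling is exactly $\tfrac{1}{\binom{K}{i+1}}a_{i+1}$ with $\Tcal=\{1\}\cup\{i+2,\ldots,K\}$; on the right only $\Q{i}_2=\Ical'=\Jcal\setminus\{m\}$ with $m\in\Jcal$ (each with probability $1/\binom{K}{i}$) contributes, the side information being $\rvMat{H}_m\V{i-1}{\Ical'}+\hat{\rvVec{Z}}_m$, and conditioning on $\Y{i}{k}=\rvMat{H}_k\V{i-1}{\Ical'}+\rvVec{Z}_k$ a posterior-covariance computation gives $b_{1,i+1}$ when $m=k$ and $b_{2,i+1}$ (in expectation over the channels) when $m\neq k$; since all $b_{l,i+1}$ with $l\ge 2$ are equal by symmetry, the sum over the $i+1$ values of $m$ equals $b_{1,i+1}+i\,b_{2,i+1}=\sum_{l\le i+1}b_{l,i+1}$, so the right side is $\tfrac{\alpha_i}{\binom{K}{i}}\sum_{l\le i+1}b_{l,i+1}$. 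The adjacent constraint thus reads, with $t=i+1$, $\alpha_t\ge\alpha_{t-1}\dfrac{\binom{K}{t}}{\binom{K}{t-1}}\dfrac{\sum_{l\le t}b_{l,t}}{a_t}$ for $t=2,\ldots,K$.

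It remains to solve the linear program: maximize $R_{\text{sym}}=\tfrac{\alpha_1}{K}a_1$ over $(\alpha_j)\in\mathbb{R}^K_+$ with $\sum_j\alpha_j=1$ subject to the above inequalities. Since the objective is increasing in $\alpha_1$ and each $\alpha_j$ can only be driven down to the lower bound forced by $\alpha_{j-1}$, all inequalities hold with equality at the optimum; chaining them and telescoping $\prod_{t=2}^{j}\binom{K}{t}/\binom{K}{t-1}=\binom{K}{j}/K$ yields $\alpha_j=\tfrac{\alpha_1}{K}\binom{K}{j}\prod_{t=2}^{j}\tfrac{\sum_{l\le t}b_{l,t}}{a_t}$, and $\sum_{j=1}^{K}\alpha_j=1$ then fixes $\alpha_1$ and gives $R_{\text{sym}}=\bigl(K+\sum_{j=2}^{K}\binom{K}{j}\prod_{t=2}^{j}\tfrac{\sum_{l\le t}b_{l,t}}{a_t}\bigr)^{-1}a_1$; maximizing over $(\beta_1,\ldots,\beta_{K-1})\in\mathbb{R}_+^{K-1}$ gives \eqref{ch-Kuser:eq:RsymN}.

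I expect the main obstacle to be the bookkeeping in the second step — translating the heavy abstract notation (the $i\shortrightarrow\Jcal$ subscripts, the collection $\{V_\Ical\}_{\Ical\subset\Jcal}$, which $\hat{Y}$'s are active in each phase, and the expectations over the time-sharing RVs) into the concrete conditional-Gaussian quantities, and in particular recognizing the right-hand side of the adjacent constraint as the symmetry-reduced sum $\sum_{l\le t}b_{l,t}$. The conditional-MIMO evaluations themselves (posterior covariance plus $\log\det$) are routine, and the more technical of these steps are the content of Appendix~\ref{ch-Kuser:appendix:Gaussianregion}.
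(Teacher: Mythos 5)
Your proposal is correct and follows essentially the same route as the paper's proof in Appendix~\ref{ch-Kuser:appendix:Gaussianregion}: specialize Theorem~\ref{ch-Kuser:IBGBC} to the stated RVs, evaluate the mutual informations as $K^{-1}a_1$, $\binom{K}{j}^{-1}a_j$ and $\binom{K}{j-1}^{-1}\sum_{l\le j}b_{l,j}$, and then maximize $\alpha_1$ by forcing the chained constraints \eqref{ch-Kuser:eq:Gaussianalphaj} to hold with equality before optimizing over the $\beta$'s. Your only additional touch is making explicit that the non-adjacent constraints ($j\neq i+1$) are vacuous because $\Yhat{i}{\Jcal}=0$ there, which the paper handles implicitly by restricting to $i=j-1$.
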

Although the maximization in \eqref{ch-Kuser:eq:RsymN} is not convex in general, it can be done numerically. We shall comment
more on this aspect in the next section with some examples.  

Now let us take a look at the high SNR regime. We consider the MISO case with $\nt=K$. We shall show from the above rate \eqref{ch-Kuser:eq:RsymN} that the optimal symmetric DoF can be achieved. To that
end, we let the compression noise variance be $\beta_i=1$, $i=1,\ldots,K-1$. 
From \eqref{ch-Kuser:eq:a_t} and \eqref{ch-Kuser:eq:b_t}, one can verify that, at high SNR,
\begin{align}
  a_t &= |\Tcal| \log \snr + O(1) = (K-t+1) \log\snr + O(1), \\
  b_{l,t} &= \begin{cases} O(1), & l=1, \\
    \log \snr + O(1), & l\ne1.
  \end{cases}
\end{align}%
Since the DoF is defined as $\textsf{DoF}_{\text{sym}}\defeq \lim_{\snr\to\infty} \frac{R_{\text{sym}}}{\log\snr}$, it follows
from \eqref{ch-Kuser:eq:RsymN} that
\begin{align}
  \textsf{DoF}_{\text{sym}}&= \left(K+ \sum_{j=2}^K
  \binom{K}{j} \prod_{t=2}^j \frac{t-1}{K-t+1} \right)^{-1} K \\
   &=\left(K+ \sum_{j=2}^K \binom{K}{j} \binom{K-1}{j-1}^{-1}
  \right)^{-1} K \\ 
   &=\biggl(\sum_{j=1}^K \frac{1}{j} \biggr)^{-1}, \label{CH-KUSER:EQ:DOF}
\end{align}%
which coincides with the optimal symmetric DoF derived in~\cite{maddah2012completely} for the same channel. Note that the
DoF achievability holds for all $\{\beta_i>0\}_i$ that do not scale with the SNR, while at finite SNR the exact values of the
$\beta$'s actually matter for the rate performance.

\subsection{Erasure BC}
\label{ch-Kuser:sec:EBC}
Next let us consider the erasure BC. We make the following choices on the RVs:
\begin{itemize}
  \item Time-sharing RVs. Let us recall that $\Q{j} = (\Q{j}_1, \Q{j}_2)$. Here we let $\Q{j}_1$ and $\Q{j}_2$ be independent for
    each $j\in\Kcal$, i.e., 
    \begin{align}
      p(\q{j}) &= p(\q{j}_1)p(\q{j}_2), \quad \forall\,\q{j} \in \Qcal^{(j)}. 
      \label{ch-Kuser:eq:TS1}
    \end{align}%
    However, we do not specify the distribution of $(\Q{j}_1, \Q{j}_2)$. 
  \item Uniformly distributed $V$'s. The RVs $V$'s with different subscripts are i.i.d.~over the input alphabet~$\Xcal$ according to a uniform distribution. Specifically,
    the distribution of $\V{i}{\Jcal}$, for each $i<j$ and $|\Jcal|=j$, is
    \begin{align}
      p(\v{i}{\Jcal}) &= \frac{1}{|\Xcal|}, \quad \forall\,\v{i}{\Jcal} \in \Xcal. 
    \end{align}%
    This choice guarantees the maximum entropy of the $V$'s, with $H(\V{i}{\Jcal})= \log|\Xcal|$.
  \item Side information $\hat{Y}$ as the overheard signal. In phase~$i$, when $\Q{i} = (l, \Ical)$ for some $l<i$ and $|\Ical| =
    i$, we set 
    \begin{align}
      \rvVecYhat{i}{\Jcal} &= \begin{cases} \X{i}, & \text{if } \Jcal\supset\Ical,\ S_{\Ical} \ne 
        \pmb{1},\ S_{\Jcal\setminus\Ical} = \pmb{1}, \text{and }S_{\bar{\Jcal}} = \pmb{0}, \\
        0, & \text{otherwise}.
      \end{cases}\label{eq:Yhat-EBC}
    \end{align}%
    The intuition behind \eqref{eq:Yhat-EBC} is the following. When $\Q{i} = (i-1,
    \Ical)$, the information intended for the users in the set $\Ical$ is being sent. 
    If this information is not received by some of the users in $\Ical$~(i.e. $S_{\Ical} \ne \pmb{1}$), and meanwhile received by
    some unintended users defined by $\Ucal$ with $\Ucal\cap\Ical=\emptyset$, then we define a new group 
    $\Jcal = \Ucal\cup\Ical$. We have the conditions $S_{\Jcal\setminus\Ical} = \pmb{1}$ and $S_{\bar{\Jcal}} =
    \pmb{0}$. 
    Thanks to the joint source-channel coding, such signal does not cost receivers in $\Jcal\setminus\Ical$ anything to decode
    since they already have the information. 
\end{itemize}

Applying the above RVs to the general region in Theorem~\ref{ch-Kuser:IBGBC}, we obtain the following corollary.
As in the Gaussian case, the technical intermediate
steps are deferred to Appendix~\ref{ch-Kuser:appendix:EBC}.   

\begin{corollary}\label{CH-KUSER:COL:ERA}
The rate tuple $(R_1,\ldots,R_K)\in\mathbb{R}^K_+$ is achievable in the EBC with state feedback 
if 
\begin{align}
R_{k} &\leq \alpha_{1} \Pr(\Q{1}_2 = k)(1-\delta_{\Kcal}) \log|\Xcal|,\label{ch-Kuser:eq:citeforAppendixA}\\
0 \leq \min_{j,k,\Jcal:\atop k\in \Jcal} \Bigl\{ \alpha_j \Pr(\Q{j}_2=\Jcal)& (1-\delta_{\Kcal  \setminus \Jcal\cup\{k\}})  -
\sum_{i=1}^{j-1}\alpha_i\!\!\sum_{\Ical\subset\Jcal,\Ical\ni k}\Pr(\Q{i}_2=\Ical)\P_{\Kcal  \setminus \Jcal\cup\{k\},\Jcal \setminus \Ical} \Bigr\}, 
\end{align}
for some $K$-tuple $(\alpha_1,\ldots,\alpha_K)\in\mathbb{R}^K_+$ with $\sum_k \alpha_k = 1$, and some distribution of
$\{\Q{j}_2\}_{j\in\Kcal}$. 
\end{corollary}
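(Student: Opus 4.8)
The plan is to specialize Theorem~\ref{ch-Kuser:IBGBC} to the pmf fixed before the statement (uniform i.i.d.\ $V$'s, $\X{j}=\V{\Q{j}_1}{\Q{j}_2}$, and $\Yhat{i}{\Jcal}$ given by the overheard-signal rule) and to evaluate every mutual information appearing in \eqref{ch-Kuser:eq:IBofGBC2}--\eqref{ch-Kuser:eq:IBofGBC1}. Both evaluations are elementary thanks to two features of the erasure model: each $\Y{j}{k}$ equals $\X{j}$ on $\{S_k=1\}$ and is an erasure on $\{S_k=0\}$, so any conditional entropy of an input-derived variable is $\log|\Xcal|$ or $0$; and since the $V$'s are mutually independent while $\X{j}=\V{\Q{j}_1}{\Q{j}_2}$ and $\Yhat{i}{\Jcal}\in\{\X{i},0\}$, a term involving $\V{i}{\Jcal}$ or $\Yhat{i}{\Jcal}$ is non-zero only on the time-sharing event where that symbol is put on the channel. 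For \eqref{ch-Kuser:eq:IBofGBC2}: since $\Q{1}_1\equiv0$, conditioning on $\Q{1}_2=\{k'\}$ gives $\X{1}=V_{k'}$, so $I(V_k;\,\cdot\,\cond\S{1},\Q{1})$ is non-zero only for $k'=k$; there user~$k$ learns $V_k$ directly when $S_k=1$, and when $S_k=0$ recovers it from the decoded side information $\Yhat{1}{\Ucal}$ with $\Ucal=\{k\}\cup\{r:S_r=1\}$, an admissible group of size $>1$ unless $S_{\Kcal}=\pmb{0}$; hence $I(V_k;\Y{1}{k},\{\Yhat{1}{\Ucal}\}_{\Ucal\ni k}\cond\S{1},\Q{1})=\Pr(\Q{1}_2=k)(1-\delta_{\Kcal})\log|\Xcal|$, which is \eqref{ch-Kuser:eq:citeforAppendixA}.

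For \eqref{ch-Kuser:eq:IBofGBC1}, fix $(i,j,k,\Jcal)$ with $i<j$ and $k\in\Jcal$. The ``forward'' term $I(\V{i}{\Jcal};\Y{j}{k},\{\Yhat{j}{\Ucal}\}_{\Ucal\supset\Jcal}\cond\S{j},\Q{j})$ is supported on $\Q{j}=(i,\Jcal)$, where $\X{j}=\V{i}{\Jcal}$; user~$k\in\Jcal$ gets this symbol when $S_k=1$, and otherwise through the already-decoded $\Yhat{j}{\Ucal}$ with $\Ucal=\Jcal\cup\{r\notin\Jcal:S_r=1\}$, which is a strictly larger group (hence an admissible ``future'' side information) unless $S_{(\Kcal\setminus\Jcal)\cup\{k\}}=\pmb{0}$, so this term equals $\Pr(\Q{j}_1=i)\Pr(\Q{j}_2=\Jcal)(1-\delta_{(\Kcal\setminus\Jcal)\cup\{k\}})\log|\Xcal|$. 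For the ``backward'' term $I(\{V_{\Ical}\}_{\Ical\subset\Jcal};\Yhat{i}{\Jcal}\cond\Y{i}{k},\S{i},\Q{i})$ one uses the Markov chain invoked in the proof of Theorem~\ref{ch-Kuser:IBGBC}: $\Yhat{i}{\Jcal}=\X{i}$ only when $\Q{i}_2=\Ical$ for some $\Ical\subset\Jcal$ of size $i$ and the erasure pattern obeys $S_{\Jcal\setminus\Ical}=\pmb{1}$, $S_{\bar{\Jcal}}=\pmb{0}$, $S_{\Ical}\ne\pmb{1}$, and given $\Y{i}{k}$ the residual entropy $H(\X{i}\cond\Y{i}{k},\S{i})$ is $\log|\Xcal|$ exactly when $k\in\Ical$ and $S_k=0$, otherwise $0$; accumulating the probability of the state event $\{S_{(\Kcal\setminus\Jcal)\cup\{k\}}=\pmb{0},\;S_{\Jcal\setminus\Ical}=\pmb{1}\}$ gives $\alpha_i\sum_{\Ical\subset\Jcal,\,\Ical\ni k}\Pr(\Q{i}_2=\Ical)\,\P_{(\Kcal\setminus\Jcal)\cup\{k\},\,\Jcal\setminus\Ical}\log|\Xcal|$.

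Substituting both expressions into \eqref{ch-Kuser:eq:IBofGBC1}, the factor $\log|\Xcal|$ cancels, leaving for each $(i,j,k,\Jcal)$ the inequality $\alpha_i\sum_{\Ical\subset\Jcal,\,\Ical\ni k}\Pr(\Q{i}_2=\Ical)\,\P_{(\Kcal\setminus\Jcal)\cup\{k\},\,\Jcal\setminus\Ical}\le\alpha_j\Pr(\Q{j}_1=i)\Pr(\Q{j}_2=\Jcal)(1-\delta_{(\Kcal\setminus\Jcal)\cup\{k\}})$. Since $\Q{j}_1$ is a free time-sharing variable with $\sum_{i=1}^{j-1}\Pr(\Q{j}_1=i)=1$, summing these $j-1$ inequalities over $i$ — as $i$ sweeps $\{1,\ldots,j-1\}$, $\Ical$ sweeps all proper subsets of $\Jcal$ containing $k$ — collapses them to the single constraint per $(j,k,\Jcal)$ in Corollary~\ref{CH-KUSER:COL:ERA}; conversely one splits the phase-$j$ time share among the layers so that $\Pr(\Q{j}_1=i)$ matches the $i$-th summand, which is immediate (e.g.\ the uniform choice of $\Q{j}_1$) in the symmetric EBC where all groups of a given size behave alike. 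With the first-constraint computation this proves the corollary. I expect the only real work to be the state-event bookkeeping of the previous paragraph — deciding, for each $(i,j,k,\Jcal)$, exactly which erasure patterns make $\Yhat{i}{\Jcal}$ informative at user~$k$ and which make it already implied by $\Y{i}{k}$ — together with the remark that the per-layer inequalities merge through the freedom in $\Q{j}_1$.
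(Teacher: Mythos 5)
Your evaluation of the three mutual-information terms is exactly the paper's (Appendix~B, the analogues of \eqref{ch-Kuser:eq:EBCspe1}--\eqref{ch-Kuser:eq:EBCspe4}): the same support arguments over the time-sharing and erasure events and the same resulting expressions, so up to that point you follow the paper's route, and summing the per-layer inequalities over $i$ indeed reproduces the displayed constraint as a \emph{necessary} condition. The gap is in how you dispose of $\Q{j}_1$. Since $\Q{j}_1$ no longer appears in the corollary, achievability requires exhibiting, for each phase $j$, a single law of $\Q{j}_1$ under which \emph{all} the per-$(i,k,\Jcal)$ inequalities of Theorem~\ref{ch-Kuser:IBGBC} hold; but the corollary is stated for a general (possibly asymmetric) EBC with an arbitrary law of $\{\Q{j}_2\}$, whereas your converse direction---``split the phase-$j$ time share so that $\Pr(\Q{j}_1=i)$ matches the $i$-th summand, which is immediate (e.g.\ the uniform choice) in the symmetric EBC''---is argued only in the symmetric case, and even there the uniform choice is wrong: uniform requires every normalized backward term to be at most $\tfrac{1}{j-1}$, which is not implied by their sum being at most $1$.

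What the paper does instead is a Fourier--Motzkin elimination of the variables $\{\Pr(\Q{j}_1=i)\}_{i<j}$: for fixed $(j,k,\Jcal)$ each inequality from Theorem~\ref{ch-Kuser:IBGBC} is read as the lower bound $\Pr(\Q{j}_1=i)\ \ge\ \frac{\alpha_i\sum_{\Ical\subset\Jcal,\Ical\ni k}\Pr(\Q{i}_2=\Ical)\,\P_{\bar{\Jcal}\cup\{k\},\Jcal\setminus\Ical}}{\alpha_j\Pr(\Q{j}_2=\Jcal)\,(1-\delta_{\bar{\Jcal}\cup\{k\}})}$, and combining these $j-1$ bounds with $0\le\Pr(\Q{j}_1=i)\le1$ and $\sum_{i=1}^{j-1}\Pr(\Q{j}_1=i)=1$ collapses them to exactly the stated constraint. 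If you prefer an explicit assignment rather than FME, you must take $\Pr(\Q{j}_1=i)$ to dominate the above ratio over \emph{all} pairs $(k,\Jcal)$ with $|\Jcal|=j$ simultaneously, since one distribution serves every group in phase $j$; your phrase ``matches the $i$-th summand'' ignores this coupling across $(k,\Jcal)$ (a point the paper's per-$(j,k,\Jcal)$ elimination also passes over silently). Replacing the symmetric-case remark by this elimination step puts your argument in line with the paper's proof.
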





For the symmetric EBC, the above region coincides with the capacity region, as will be shown in
Corollary~\ref{CH-KUSER:CORO:SYMSPIND}. 

\begin{definition}
An EBC is said to be symmetric if the erasure probability $\delta_{\Ucal}$ only depends on the
cardinality of the set $\Ucal$, that is, $\delta_{\Ucal}=\delta_{\Ucal'}$ if $\vert\Ucal\vert=\vert\Ucal'\vert$.
\end{definition}

\begin{corollary}\label{CH-KUSER:CORO:SYMSPIND}
  The JSC scheme achieves the following  
capacity region of the symmetric EBC.
\begin{align}
  \Ccal_{\text{sym}}^{\text{EBC}}= \bigcap_{\pmb{\pi}} \left\{ \begin{array}{c}
  (R_{\pi(1)},\ldots,R_{\pi(K)})\in\mathbb{R}^K_+:  \\[1ex]
   \sum_{k=1}^K  \dfrac{R_{\pi(k)}}{1-\delta_{\{\pi(1),\cdots,\pi(k)\}}}    \leq \log|\Xcal| \end{array}\right\}, \label{ch-Kuser:eq:Csym} 
\end{align}%
where the intersection is over all permutations $\pmb{\pi} \defeq (\pi(1),\cdots,\pi(K))$ of $(1,\ldots,K)$.
\end{corollary}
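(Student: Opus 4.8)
The plan is to prove two directions: achievability (the region \eqref{ch-Kuser:eq:Csym} is contained in what the JSC scheme attains via Corollary \ref{CH-KUSER:COL:ERA}) and a matching converse (no scheme can do better). For \textbf{achievability}, I would start from Corollary \ref{CH-KUSER:COL:ERA} and exploit the symmetry assumption $\delta_{\Ucal}=\delta_{|\Ucal|}$, writing $\delta_m$ for the common value over $|\Ucal|=m$ and similarly collapsing $\P_{\Fcal,\Tcal}$ to depend only on $(|\Fcal|,|\Tcal|)$. Because the target region \eqref{ch-Kuser:eq:Csym} is itself permutation-symmetric (it is an intersection over all permutations $\pmb\pi$ of the same weighted-sum constraint), it suffices to show that every vertex/extreme point of \eqref{ch-Kuser:eq:Csym} is achievable; by the symmetry of the scheme under relabeling users, I can focus on the vertex associated with the identity permutation, where $R_k/(1-\delta_k)$ telescopes so that $R_k = (1-\delta_k)\log|\Xcal| - (1-\delta_{k-1})\log|\Xcal|$-type differences. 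The key computational step is to choose the free parameters in Corollary \ref{CH-KUSER:COL:ERA} — the phase lengths $\alpha_1,\ldots,\alpha_K$ and the time-sharing distributions $\{\Q{j}_2\}$ — so that all the constraints
\begin{align}
\alpha_j \Pr(\Q{j}_2=\Jcal)(1-\delta_{\Kcal\setminus\Jcal\cup\{k\}}) \ge \sum_{i=1}^{j-1}\alpha_i\!\!\sum_{\Ical\subset\Jcal,\Ical\ni k}\Pr(\Q{i}_2=\Ical)\P_{\Kcal\setminus\Jcal\cup\{k\},\Jcal\setminus\Ical}\nonumber
\end{align}
hold with equality, and then verify that \eqref{ch-Kuser:eq:citeforAppendixA} yields exactly the claimed rate combination. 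Under symmetry, $1-\delta_{\Kcal\setminus\Jcal\cup\{k\}}$ depends only on $K-j+1$ and $\P_{\Kcal\setminus\Jcal\cup\{k\},\Jcal\setminus\Ical}$ depends only on $(K-j+1,\,j-|\Ical|)$, so the recursion on the $\alpha_j$'s closes in a clean form; I expect the resulting $\alpha_j$ to be a ratio of sums of erasure probabilities, and then a (somewhat tedious) algebraic identity involving binomial coefficients and the $\delta_m$'s should collapse the achievable symmetric point to $R_{\text{sym}} = \bigl(\sum_{k=1}^K \frac{1}{1-\delta_k}\bigr)^{-1}\log|\Xcal|$, which is precisely the symmetric point on the boundary of \eqref{ch-Kuser:eq:Csym}. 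Achievability of the full region then follows from the vertex argument plus time-sharing among permutations.

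For the \textbf{converse}, I would invoke the known outer bound for the erasure BC with (delayed) state feedback: for any fixed permutation $\pmb\pi$, the weighted sum $\sum_{k=1}^K R_{\pi(k)}/(1-\delta_{\{\pi(1),\ldots,\pi(k)\}}) \le \log|\Xcal|$ is an outer bound. This is the standard "physically degraded enhancement" / entropy-accumulation argument: give the signal of the ordered users $\pi(1),\ldots,\pi(k)$ to a genie, use the fact that feedback does not help a degraded BC, and combine Fano's inequality with the erasure-channel single-letter entropy bound $H(Y_{\pi(k)}^n \mid \cdot) \le n(1-\delta_k)\log|\Xcal|$. This bound is essentially the one already established in \cite{wang2012capacity,gatzianas2013multiuser} for packet erasure channels, and the same proof carries over verbatim to an arbitrary finite alphabet $\Xcal$ since it only uses $H(X)\le\log|\Xcal|$ and the memoryless erasure structure — so I would state it, cite it, and remark that the extension to general $|\Xcal|$ is immediate. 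Intersecting over all permutations gives exactly \eqref{ch-Kuser:eq:Csym} as an outer bound.

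The \textbf{main obstacle} is the achievability bookkeeping: showing that a single choice of $(\alpha_j)$ and $(\Pr(\Q{j}_2=\Jcal))$ simultaneously meets all $\binom{K}{j}$-indexed constraints with equality and reproduces the telescoping structure of \eqref{ch-Kuser:eq:Csym}. Concretely, the delicate point is that \eqref{ch-Kuser:eq:Csym} is an \emph{intersection} of $K!$ half-spaces, so one must argue that the JSC region — which a priori is described by a different (union over parameters) form — has the same extreme points. I would handle this by (i) observing both regions are symmetric polymatroid-like sets contained in the simplex, (ii) checking the single symmetric vertex $R_{\text{sym}}=(\sum_k (1-\delta_k)^{-1})^{-1}\log|\Xcal|$ is attained (the core computation above), and (iii) checking the "corner" vertices where a subset of users get zero rate, which reduce to the same statement for a smaller BC by a straightforward restriction argument, closing an induction on $K$. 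The base case $K=1$ is trivial ($R_1 \le (1-\delta_1)\log|\Xcal|$, the point-to-point erasure capacity). Modulo the algebraic identities, this strategy yields both inclusions and hence the claimed capacity characterization; the details of the parameter optimization and the binomial identities are exactly what I would defer to the appendix.
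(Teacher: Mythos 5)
Your division of labor is the same as the paper's: the converse is not re-proved but borrowed from \cite{wang2012capacity,gatzianas2013multiuser} via the artificially degraded BC (the paper makes exactly this remark before pointing to Appendix~\ref{ch-Kuser:appendix:EBCsym}), and all the work is in showing that the region of Corollary~\ref{CH-KUSER:COL:ERA} fills out \eqref{ch-Kuser:eq:Csym}. Where you genuinely diverge is the organization of the achievability. The paper does not argue through extreme points at all: it sets $\mu_{\Jcal}=\alpha_j\Pr(\Q{j}_2=\Jcal)$, shows the constraints may be taken with equality, proves a structural lemma (Lemma~\ref{ch-Kuser:lm:property}) identifying which $k\in\Jcal$ attains the max for a fixed ordering $\mu_{\{1\}}\ge\cdots\ge\mu_{\{K\}}$, and then establishes by induction the prefix-sum identity of Lemma~\ref{ch-Kuser:lm:recursion} using the inclusion--exclusion expansion of $\P_{\Fcal,\Tcal}$ in terms of the $\delta$'s; summing the normalization $\sum_{\Jcal}\mu_{\Jcal}=1$ then converts directly into $\sum_k R_k/(1-\delta_{\{1,\ldots,k\}})\le\log|\Xcal|$, so the \emph{entire} dominant face for each ordering is achieved in one stroke, with no time-sharing over permutations. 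Your route (achieve the vertices, convexify, induct on $K$ for the zero-rate corners) is also viable and arguably lighter conceptually, but it obliges you to prove that the extreme points of \eqref{ch-Kuser:eq:Csym} are exactly the points whose nonzero rates are all equal (true, because two permutation constraints can be simultaneously tight only at tuples both of them sort, so distinct positive rates leave at most one tight permutation constraint -- but this must be stated, ``polymatroid-like'' is not an argument), and the induction step for the corners must use the $m$-user statement on the sub-channel rather than a literal restriction of Corollary~\ref{CH-KUSER:COL:ERA}, whose constraints reference complements in the full set $\Kcal$.

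The genuine gap is that the heart of the proof is only conjectured: you ``expect'' the equality recursion among the $\alpha_j\Pr(\Q{j}_2=\Jcal)$ to collapse, via a binomial identity, to $R_{\text{sym}}=\bigl(\sum_{k=1}^K(1-\delta_{\{1,\ldots,k\}})^{-1}\bigr)^{-1}\log|\Xcal|$. That collapse is precisely the nontrivial content of Lemmas~\ref{ch-Kuser:lm:property} and~\ref{ch-Kuser:lm:recursion} in the paper -- an induction over subsets that needs the identity $\P_{\Fcal,\Tcal}=\sum_{\Ucal\subseteq\Tcal}(-1)^{|\Ucal|+1}(1-\delta_{\Fcal\cup\Ucal})$ -- and it does not reduce to a one-line computation even at the uniform symmetric operating point; without it your achievability is a plan, not a proof. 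A smaller but real error: the aside that at the identity-permutation ``vertex'' the rates telescope as $R_k=\bigl[(1-\delta_{\{1,\ldots,k\}})-(1-\delta_{\{1,\ldots,k-1\}})\bigr]\log|\Xcal|$ is wrong for this region -- already for $K=2$ that point violates the identity-permutation constraint, and the greedy/telescoping points are not vertices of \eqref{ch-Kuser:eq:Csym} (its only full-support vertex is the symmetric point). This slip is not load-bearing, since your operative plan uses only the symmetric vertex and the zero-rate corners, but it should be removed.
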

\begin{proof}
The converse can be found in \cite{wang2012capacity,gatzianas2013multiuser} with the standard outer bound techniques by creating an artificial degraded BC. A detailed proof on the achievability is provided in Appendix~\ref{ch-Kuser:appendix:EBCsym}.
\end{proof}

\begin{remark}
The capacity region for the general EBC with state feedback is still unknown. In
\cite{wang2012capacity,gatzianas2013multiuser}, the authors designed a special scheme that can achieve the capacity region for
the general EBC with three users. In their capacity-achieving scheme, the transmitted signal can depend simultaneously on
messages from different phases, e.g., $M_{1}$ and $\M{2}{\{2,3\}}$. Such a result suggests that coded time-sharing may not be
enough to achieve the capacity with JSC scheme in general. We believe that it is possible to set the RVs in our region in a
similar way as the scheme in \cite{wang2012capacity,gatzianas2013multiuser} to achieve the three-user capacity region. However, it is out
of the scope of the current paper and is not considered here. Nevertheless, the capacity region beyond three users remains
unknown. 


\end{remark}
\section{Numerical Examples}\label{ch-Kuser:sec:Simulation}


In this section, we consider the Gaussian MISO channel with i.i.d.~Rayleigh
fading for $K=2$ and $3$ users. We let $\nt=K$ and evaluate the symmetric rate \eqref{ch-Kuser:eq:RsymN} of the
JSC scheme. The
maximization \eqref{ch-Kuser:eq:RsymN} over $\left\{ \beta_1,\ldots,\beta_{K-1} \right\}$ is done numerically. Since $K$ is
small in our examples, we simply sample each $\beta_i$ uniformly within a given region
of~$\beta_i$ with a small step size and then find out the maximum value of \eqref{ch-Kuser:eq:RsymN}.\footnote{Although the numerical maximum value with such a method may not be optimal, but it still represents an achievable
rate. }
For larger values of $K$, however, more sophisticated numerical methods may be needed.

Our scheme is compared to the following baseline schemes:
\begin{enumerate} 
\item The TDMA scheme. It is optimal for the no CSIT case, and achieves the following symmetric rate
  \begin{align}
    R_{\text{sym}}^{\text{TDMA}} &\defeq \frac{1}{K} I(X; Y_1 \cond S)\\
    &= \frac{1}{K} \mathbb{E} \left[ \log (1+ \snr \Vert\rvVec{H}_{1}\Vert^2) \right]. 
  \end{align}%
\item The original MAT scheme from \cite{maddah2012completely}. Here we consider the MAT scheme as described in
  \cite{maddah2012completely}, except for adding a proper linear scaling factor to meet the transmit power constraint. 
\item The generalized MAT~(GMAT) scheme from \cite{yi2013precoding}. 
  With GMAT, a precoder is designed to balance the alignment of interference and the enhancement of each signal. 
  The GMAT scheme includes the MAT scheme as a special case by
   letting the precoder be the respective channel matrix $\HM$ to reconstruct the overheard observations.
\item The quantized MAT~(QMAT) scheme from \cite{ali2014approximate}. Instead of sending the analogy linear
  combinations as the MAT scheme does, the QMAT transmits a \emph{quantized} version of each linear combination. In phase~$j$, 
  it turns out that $(j-1)(j+2)$ is the minimum quantization noise variance such that the message for group $\Jcal$ can be
  recovered at each user through a $(K-j+1) \times 1$ MISO channel. 
  The achievable symmetric rate is 
\begin{align}
  R_{\text{sym}}^{\text{QMAT}} = \frac{1}{K} {{\normalfont\textsf{DoF}_{\text{sym}}}}\, \mathbb{E} \left[ \log \det
  \bigl(\IM+ \snr\, \rvMat{H}_{\Kcal}\rvMat{H}_{\Kcal}^H \tilde{\NM}^{-1} \bigr) \right],  \label{ch-Kuser:eq:QMAT}
\end{align}
where $\tilde{\NM} \defeq \mathrm{diag} \Bigl(\{1+(j-1)(j+2)\}_{j=1:K}\Bigr)$ and {{\normalfont$\textsf{DoF}_{\text{sym}}$}} is
given by \eqref{CH-KUSER:EQ:DOF}. The fundamental differences between our JSC scheme and the QMAT are: 1) we use joint source-channel
coding while QMAT uses separate coding, 2) our source codebook is generated by $\hat{Y}$ that indicates what the users need while
QMAT explicitly generates linear combinations and the quantization of each combinations, and 3) the JSC scheme uses
all the transmit antennas all the time while the QMAT uses only a subset of $K-j+1$ transmit antennas in each phase~$j\in\Kcal$.
\item The genie-aided upper bound. 
For $k\in\Kcal$, a genie provides the output $Y_{k}$ to users $l\in\{k+1,\cdots,K\}$. 
The new channel can only have a larger capacity region than the original one, and it is a \emph{physically degraded} BC whose
capacity region cannot be enlarged with feedback. The single-letter characterization of the capacity region of such degraded BC is, for
some pmf $p(x\cond u_{K-1}) \prod_{k=2}^{K-1} p(u_k \cond u_{k-1}) p(u_1)$, is 
\begin{align}
R_k \le I(U_k;Y_1^k \cond U_{k-1}, S), \forall k\in\Kcal, \label{ch-Kuser:eq:UB}
\end{align}
where we define $U_K = X$ and $U_0 = 0$ for convenience~\cite{elgamal_kim}. 
Thus, the symmetric capacity of the original channel must satisfy \eqref{ch-Kuser:eq:UB}, which yields the following upper bound on the weighted sum
\begin{align}
  \sum_{k=1}^K \frac{C_{\text{sym}}}{k} & \le I(U_1;Y_1\cond S)+ \cdots + \frac{1}{K} I(U_K;Y_1^K\cond U_{K-1},S)\\
  & \le h(Y_1\cond S) + \sum_{k=1}^{K-1} \left(\frac{h(Y_1^{k+1}\cond U_k,S)}{k+1} -\frac{h(Y_1^k\cond U_k, S)}{k} \right) - \frac{1}{K} h(Y_1^K \cond X, S)\\
  & \le h(Y_1\cond S) - \frac{1}{K} h(Y_1^K \cond X, S)\\
  & = I(X;Y_1 \cond S) =  K R_{\text{sym}}^{\text{TDMA}},
\end{align}
where the second inequality is from the Markovity $h(Y_1^k\cond U_k, U_{k-1}, S) = h(Y_1^k\cond U_k, S)$ by the
construction of the pmf; 
the third inequality follows from the symmetry of the channel output in a symmetric fading
channel~\cite{vaze2012dofnoCSIT}, i.e., when $|\Jcal| \ge |\Ical|$,
 $\frac{h(Y_{\Jcal}\cond U,S)}{|\Jcal|} \le \frac{h(Y_{\Ical} \cond U, S)}{|\Ical|}$;
and the last equality holds since $h(Y_1^K \cond X, S) = h(Z_1^K \cond S) = K\, h(Y_1 \cond X, S)$. 
Hence, we have the following upper bound on the symmetric capacity
\begin{align}
  C_\text{sym} \le \textsf{DoF}_{\text{sym}} R_{\text{sym}}^{\text{TDMA}}.
\end{align}%
\end{enumerate}

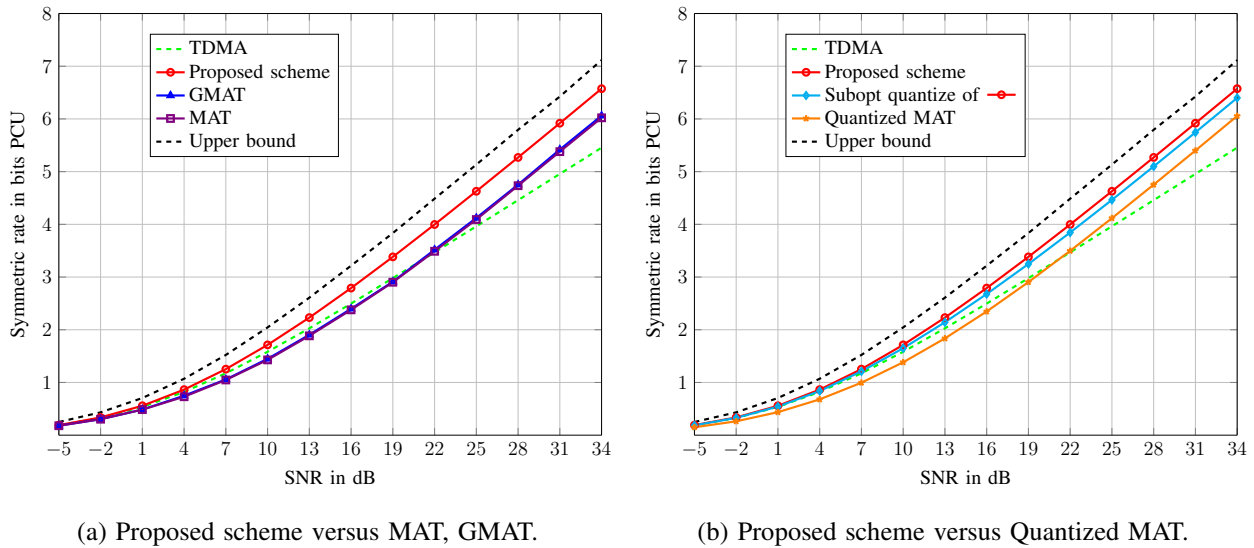
\begin{figure*}[t!]
    \centering
    \begin{subfigure}[t]{0.5\textwidth}
%
%
\definecolor{mycolor1}{rgb}{1.00000,0.00000,1.00000}%
%
{\centering
\resizebox {\textwidth} {!} {
\begin{tikzpicture}

\begin{axis}[%
width=4.521in,
height=3.515in,
at={(0.758in,0.474in)},
scale only axis,
separate axis lines,
every outer x axis line/.append style={black},
every x tick label/.append style={font=\color{black}},
xmin=-5,
xmax=34,
xtick={-5, -2,  1,  4,  7, 10, 13, 16, 19, 22, 25, 28, 31, 34},
xlabel={SNR in dB},
xmajorgrids,
every outer y axis line/.append style={black},
every y tick label/.append style={font=\color{black}},
ymin=0,
ymax=8,
ytick={ 1,  2,  3,  4,  5,  6,  7,  8,  9, 10, 11, 12, 13, 14, 15, 16, 17, 18},
ylabel={Symmetric rate in bits PCU},
ymajorgrids,
axis background/.style={fill=white},
legend style={at={(0.168,0.66)},anchor=south west,legend cell align=left,align=left,draw=black}
]

\addplot [color=green,dashed,line width=1.3pt]
  table[row sep=crcr]{%
-5	0.187546540538113\\
-2	0.327237282678585\\
1	0.536288359584215\\
4	0.820538908673256\\
7	1.17389984958115\\
10	1.58176493812021\\
13	2.02733681093898\\
16	2.49640153377683\\
19	2.97901396497745\\
22	3.46904859598788\\
25	3.96303430952266\\
28	4.45909655366292\\
31	4.95624274607616\\
34	5.45395005691729\\
};
\addlegendentry{TDMA};

\addplot [color=red,solid,line width=1.3pt,mark=o]
  table[row sep=crcr]{%
-5	0.190347371796437\\
-2	0.336087778369501\\
1	0.557626564084257\\
4	0.864062026136201\\
7	1.25288222638402\\
10	1.71310540926666\\
13	2.23045187139519\\
16	2.79085664116241\\
19	3.38276146333855\\
22	3.99726212539795\\
25	4.62775673409968\\
28	5.26943397913751\\
31	5.91888858318266\\
34	6.57402286841659\\
};
\addlegendentry{Proposed scheme};

\addplot [color=blue,solid,line width=1.3pt,mark=triangle]
  table[row sep=crcr]{%
-5	0.179784332445417\\
-2	0.302998854929873\\
1	0.48555130548121\\
4	0.748626416013315\\
7	1.05727362705386\\
10	1.44799010091917\\
13	1.90417628536524\\
16	2.39790595477219\\
19	2.91150643010322\\
22	3.51869539097274\\
25	4.12163796146211\\
28	4.75357880019832\\
31	5.41902340600179\\
34	6.06091152101611\\
};
\addlegendentry{GMAT};

\addplot [color=violet,solid,line width=1.3pt,mark=square]
  table[row sep=crcr]{%
-5	0.178788426749417\\
-2	0.302885739324498\\
1	0.48451270528121\\
4	0.728862550413315\\
7	1.04727363501686\\
10	1.42799010091917\\
13	1.88453652176284\\
16	2.37754772190599\\
19	2.90150639200973\\
22	3.48869430152274\\
25	4.09163141028311\\
28	4.73357967886192\\
31	5.37911040526079\\
34	6.02091110100236\\
};
\addlegendentry{MAT};

\addplot [color=black,dashed,line width=1.3pt]
  table[row sep=crcr]{%
-5	0.249966397778481\\
-2	0.432895476925853\\
1	0.704897382891711\\
4	1.07024831199751\\
7	1.52218232697837\\
10	2.04417097892077\\
13	2.60848211176609\\
16	3.21565067599771\\
19	3.83409996150396\\
22	4.48699940333282\\
25	5.13358391812318\\
28	5.79379794106812\\
31	6.42459832530775\\
34	7.1162532118365\\
};
\addlegendentry{Upper bound};

\end{axis}
\end{tikzpicture}%
}
\caption{Proposed scheme versus MAT, GMAT.}
\label{ch-Kuser:fig:twoUMATProposed}
}


    \end{subfigure}%
    ~ 
    \begin{subfigure}[t]{0.5\textwidth}
%
%
\definecolor{mycolor1}{rgb}{1.00000,0.00000,1.00000}%
%
{\centering
\resizebox {\textwidth} {!} {
\begin{tikzpicture}

\begin{axis}[%
width=4.521in,
height=3.515in,
at={(0.758in,0.474in)},
scale only axis,
separate axis lines,
every outer x axis line/.append style={black},
every x tick label/.append style={font=\color{black}},
xmin=-5,
xmax=34,
xtick={-5, -2,  1,  4,  7, 10, 13, 16, 19, 22, 25, 28, 31, 34},
xlabel={SNR in dB},
xmajorgrids,
every outer y axis line/.append style={black},
every y tick label/.append style={font=\color{black}},
ymin=0,
ymax=8,
ytick={ 1,  2,  3,  4,  5,  6,  7,  8,  9, 10, 11, 12, 13, 14, 15, 16, 17, 18},
ylabel={Symmetric rate in bits PCU},
ymajorgrids,
axis background/.style={fill=white},
legend style={at={(0.168,0.66)},anchor=south west,legend cell align=left,align=left,draw=black}
]

\addplot [color=green,dashed,line width=1.3pt]
  table[row sep=crcr]{%
-5	0.187546540538113\\
-2	0.327237282678585\\
1	0.536288359584215\\
4	0.820538908673256\\
7	1.17389984958115\\
10	1.58176493812021\\
13	2.02733681093898\\
16	2.49640153377683\\
19	2.97901396497745\\
22	3.46904859598788\\
25	3.96303430952266\\
28	4.45909655366292\\
31	4.95624274607616\\
34	5.45395005691729\\
};
\addlegendentry{TDMA};

\addplot [color=red,solid,line width=1.3pt,mark=o]
  table[row sep=crcr]{%
-5	0.190347371796437\\
-2	0.336087778369501\\
1	0.557626564084257\\
4	0.864062026136201\\
7	1.25288222638402\\
10	1.71310540926666\\
13	2.23045187139519\\
16	2.79085664116241\\
19	3.38276146333855\\
22	3.99726212539795\\
25	4.62775673409968\\
28	5.26943397913751\\
31	5.91888858318266\\
34	6.57402286841659\\
};
\addlegendentry{Proposed scheme};\label{ch-Kuser:curve:twoProp}

\addplot [color=cyan,solid,line width=1.3pt,mark=diamond]
  table[row sep=crcr]{%
-5	0.186570657225929\\
-2	0.328134545841131\\
1	0.542499579831951\\
4	0.837698422688542\\
7	1.21036490574486\\
10	1.64942588545177\\
13	2.14214651833402\\
16	2.67776409872826\\
19	3.24784591186231\\
22	3.84534557207308\\
25	4.4640615220525\\
28	5.09867848402164\\
31	5.74488068572664\\
34	6.39928657583589\\
};
\addlegendentry{Subopt quantize of \ref{ch-Kuser:curve:twoProp}};

\addplot [color=orange,solid,line width=1.3pt,mark=star]
  table[row sep=crcr]{%
-5	0.14950147704746\\
-2	0.26297177613729\\
1	0.436262702238935\\
4	0.679353844096745\\
7	0.995594465386455\\
10	1.38302560878906\\
13	1.83615509728285\\
16	2.34670141583657\\
19	2.90425010579863\\
22	3.49764860930324\\
25	4.11666875038826\\
28	4.75306558361748\\
31	5.40076395355474\\
34	6.05552537719563\\
};
\addlegendentry{Quantized MAT};


\addplot [color=black,dashed,line width=1.3pt]
  table[row sep=crcr]{%
-5	0.249966397778481\\
-2	0.432895476925853\\
1	0.704897382891711\\
4	1.07024831199751\\
7	1.52218232697837\\
10	2.04417097892077\\
13	2.60848211176609\\
16	3.21565067599771\\
19	3.83409996150396\\
22	4.48699940333282\\
25	5.13358391812318\\
28	5.79379794106812\\
31	6.42459832530775\\
34	7.1162532118365\\
};
\addlegendentry{Upper bound};

\end{axis}
\end{tikzpicture}%
}
\caption{Proposed scheme versus Quantized MAT.}
\label{ch-Kuser:fig:twoUvsQuantize}
}

    \end{subfigure}
    \caption{The proposed scheme versus the baseline schemes: two-user BC.}
    \label{fig:2user}
\end{figure*}


The two-user and three-user cases are evaluated separately in Fig.~\ref{fig:2user} and Fig.~\ref{fig:3user}, respectively. 
In both Fig.~\ref{ch-Kuser:fig:twoUvsQuantize} and Fig.~\ref{ch-Kuser:fig:threeUvsQuantize}, 
the curve \ref{ch-Kuser:curve:threePropsubopt} denotes a variant of our proposed scheme where the quantization noises $\beta_j$'s
are not optimized. Instead, we apply the same equivalent compression noise variance used in the QMAT scheme, that is,
$\beta_{j-1}=1+(j-1)(j+2)$. We have the following comments on the results. 
\begin{itemize}
\item 
  From the plots, we see that the curves of the GMAT scheme proposed in \cite{yi2013precoding} and the MAT curves almost
  overlap in all SNR regime for $K=2,3$. It shows that the performance improvement brought by carefully designing the linear
  combinations~(referred to as precoder) is marginal in the i.i.d.~isotropic fading case. Another generalization direction within
  the MAT framework is the quantization of linear combinations. Although the MAT and QMAT are not compared directly in the same
  plot, we can still observe that the QMAT does outperform the MAT scheme especially when $K=3$ in medium-to-high
  SNR regime. However, appreciable gain appears only at high SNR. 
\item In the low-to-medium SNR regime, the MAT/GMAT/QMAT schemes are outperformed by the TDMA. This result is somewhat surprising
  since, unlike the other schemes, TDMA does not exploit the state feedback. Indeed, the MAT-like schemes use the state feedback to
  perform interference alignment which is known to be optimal at high SNR but is usually less good when the SNR is not high. In
  such regime, the channel is not interference limited and sending linear equations may be too costly for the marginal interference
  mitigation effect. In the high SNR regime, the MAT-like schemes dominates the TDMA scheme eventually thanks to a larger DoF gain,
  which is reflected by the slopes of the curves. The optimal DoF of the MAT-like schemes is also confirmed by the fact that the
  corresponding curves are almost parallel to the upper bound curve.

\begin{figure*}[t!]
    \centering
    \begin{subfigure}[t]{0.5\textwidth}
%
%
\definecolor{mycolor1}{rgb}{1.00000,0.00000,1.00000}
%
{\centering
\resizebox {\textwidth} {!} {
\begin{tikzpicture}

\begin{axis}[%
width=4.521in,
height=3.515in,
at={(0.758in,0.474in)},
scale only axis,
separate axis lines,
every outer x axis line/.append style={black},
every x tick label/.append style={font=\color{black}},
xmin=-5,
xmax=34,
xtick={-5, -2,  1,  4,  7, 10, 13, 16, 19, 22, 25, 28, 31, 34},
xlabel={SNR(dB)},
xmajorgrids,
every outer y axis line/.append style={black},
every y tick label/.append style={font=\color{black}},
ymin=0,
ymax=7,
ytick={0.5,   1, 1.5,   2, 2.5,   3, 3.5,   4, 4.5,   5, 5.5,   6, 6.5,   7, 7.5,   8},
ylabel={Symmetric rate (bits per channel use)},
ymajorgrids,
axis background/.style={fill=white},
legend style={at={(0.165,0.629)},anchor=south west,legend cell align=left,align=left,draw=black}
]

\addplot [color=green,dashed,line width=1.3pt]
  table[row sep=crcr]{%
-5	0.301771884315411\\
-2	0.476837140736828\\
1	0.701586989294782\\
4	0.966675735763477\\
7	1.25999195934402\\
10	1.57088280796413\\
13	1.89183326886402\\
16	2.21822890709335\\
19	2.5474720602495\\
22	2.87817495060288\\
25	3.20961805344472\\
28	3.54143436259208\\
31	3.87343828364017\\
34	4.20553637646208\\
};
\addlegendentry{TDMA};\label{lineTDMA}

\addplot [color=red,solid,line width=1.3pt,mark=o]
  table[row sep=crcr]{%
-5	0.331645960792184\\
-2	0.541449455431748\\
1	0.823700984045328\\
4	1.17228498656287\\
7	1.57654213166155\\
10	2.02530356896868\\
13	2.50600310632889\\
16	3.0091543356911\\
19	3.52764037779381\\
22	4.05553272712782\\
25	4.58963576730721\\
28	5.1276320076776\\
31	5.66790964162738\\
34	6.20952611213445\\
};
\addlegendentry{Proposed scheme};

\addplot [color=blue,solid,line width=1.3pt,mark=triangle]
  table[row sep=crcr]{%
-5	0.26775085057974\\
-2	0.417304273561489\\
1	0.618315327711643\\
4	0.862157930691768\\
7	1.15246334434786\\
10	1.48632629290284\\
13	1.85602218329143\\
16	2.26937092292879\\
19	2.6925306659039\\
22	3.14291980927657\\
25	3.62546625018125\\
28	4.10573817958151\\
31	4.59618529604007\\
34	5.0988560122337\\
};
\addlegendentry{GMAT};
  
\addplot [color=violet,solid,line width=1.3pt,mark=square]
  table[row sep=crcr]{%
-5	0.257783967157747\\
-2	0.402039820751923\\
1	0.594086810281968\\
4	0.835406888207668\\
7	1.12446457214482\\
10	1.45752287829792\\
13	1.8296059359264\\
16	2.23529769102872\\
19	2.66911629327713\\
22	3.12557329234901\\
25	3.59917240216885\\
28	4.08449480107646\\
31	4.57642100494914\\
34	5.07047584615221\\
};
\addlegendentry{MAT};

\addplot [color=black,dashed,line width=1.3pt]
  table[row sep=crcr]{%
-5	0.493808537970673\\
-2	0.780278957569355\\
1	1.14805143702783\\
4	1.58183302215842\\
7	2.06180502438112\\
10	2.57053550394131\\
13	3.09572716723203\\
16	3.62982912069822\\
19	4.16859064404463\\
22	4.70974082825926\\
25	5.25210226927318\\
28	5.7950744115143\\
31	6.33835355504755\\
34	6.88178679784704\\
};
\addlegendentry{Upper bound};

\end{axis}
\end{tikzpicture}%
}
\caption{Proposed scheme versus MAT, GMAT.}
\label{ch-Kuser:fig:threeUMATProposed}
}

%
    \end{subfigure}%
    ~ 
    \begin{subfigure}[t]{0.5\textwidth}
%
%
\definecolor{mycolor1}{rgb}{0.00000,1.00000,1.00000}
%
{\centering
\resizebox {\textwidth} {!} {
\begin{tikzpicture}

\begin{axis}[%
width=4.456in,
height=3.481in,
at={(0.747in,0.47in)},
scale only axis,
separate axis lines,
every outer x axis line/.append style={black},
every x tick label/.append style={font=\color{black}},
xmin=-5,
xmax=34,
xtick={-5, -2,  1,  4,  7, 10, 13, 16, 19, 22, 25, 28, 31, 34},
xlabel={SNR in dB},
xmajorgrids,
every outer y axis line/.append style={black},
every y tick label/.append style={font=\color{black}},
ymin=0,
ymax=7,
ytick={0.5,   1, 1.5,   2, 2.5,   3, 3.5,   4, 4.5,   5, 5.5,   6, 6.5,   7, 7.5,   8},
ylabel={Symmetric rate in bits PCU},
ymajorgrids,
axis background/.style={fill=white},
legend style={at={(0.15,0.648)},anchor=south west,legend cell align=left,align=left,draw=black}
]

\addplot [color=green,dashed,line width=1.3pt]
  table[row sep=crcr]{%
-5	0.300127645422714\\
-2	0.474377544359828\\
1	0.698292072341991\\
4	0.962693775820847\\
7	1.25554969891168\\
10	1.5661777353608\\
13	1.88699198702405\\
16	2.21331984201564\\
19	2.54252953941912\\
22	2.87321586338697\\
25	3.20465072483643\\
28	3.53646292046476\\
31	3.86846478433185\\
34	4.20056184726784\\
};
\addlegendentry{TDMA};

\addplot [color=red,solid,line width=1.3pt,mark=o]
  table[row sep=crcr]{%
-5	0.332203643230616\\
-2	0.541523542376721\\
1	0.822618764077206\\
4	1.16931162891432\\
7	1.57059408479083\\
10	2.0161187830935\\
13	2.49343834928073\\
16	2.99353070938056\\
19	3.50975653141558\\
22	4.03587122116389\\
25	4.56831386896588\\
28	5.10519289464525\\
31	5.64482124476757\\
34	6.18599215082298\\
};
\addlegendentry{Proposed scheme};\label{ch-Kuser:curve:threeProp}

\addplot [color=cyan,solid,line width=1.3pt,mark=diamond]
  table[row sep=crcr]{%
-5	0.322926737288566\\
-2	0.52346973877581\\
1	0.792212890353638\\
4	1.1231561585912\\
7	1.50476222739505\\
10	1.92584590629375\\
13	2.37840868905399\\
16	2.85685214549232\\
19	3.35632600443112\\
22	3.87206139704364\\
25	4.39958086900216\\
28	4.93510258870414\\
31	5.47573349412329\\
34	6.0194125534816\\
};
\addlegendentry{Subopt quantize of \ref{ch-Kuser:curve:threeProp}};\label{ch-Kuser:curve:threePropsubopt}

\addplot [color=orange,solid,line width=1.3pt,mark=star]
  table[row sep=crcr]{%
-5	0.218079837750247\\
-2	0.354271362558238\\
1	0.542068790327693\\
4	0.784853112858585\\
7	1.08255545601698\\
10	1.43207623623917\\
13	1.82812537331855\\
16	2.26407571072935\\
19	2.7325291467204\\
22	3.2259132509524\\
25	3.73725214850026\\
28	4.26080147129932\\
31	4.79227484639119\\
34	5.32870915267647\\
};
\addlegendentry{Quantized MAT};

\addplot [color=black,dashed,line width=1.3pt]
  table[row sep=crcr]{%
-5	0.491117965237169\\
-2	0.7762541634979\\
1	1.14265975474144\\
4	1.57531708770684\\
7	2.05453587094638\\
10	2.56283629422677\\
13	3.08780506967571\\
16	3.62179610511651\\
19	4.16050288268583\\
22	4.70162595826958\\
25	5.24397391336871\\
28	5.78693932439687\\
31	6.33021510163394\\
34	6.87364665916556\\
};
\addlegendentry{Upper bound};

\end{axis}
\end{tikzpicture}%
}
\caption{Proposed scheme versus Quantized MAT.}
\label{ch-Kuser:fig:threeUvsQuantize}
}
    \end{subfigure}
    \caption{The proposed scheme versus the baseline schemes: three-user BC.}
    \label{fig:3user}
\end{figure*}
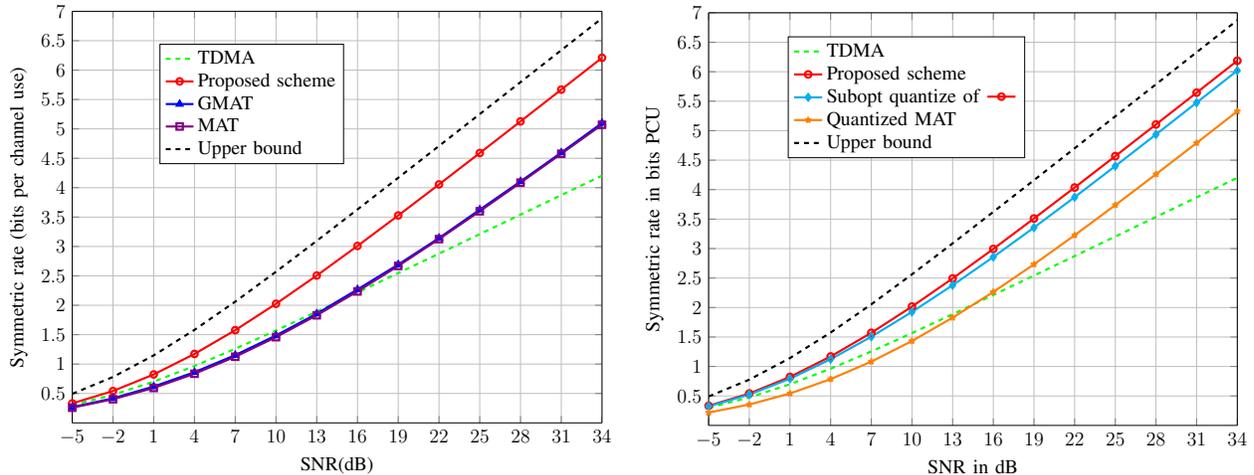

\item In all SNR {regimes}, the proposed JSC scheme outperforms all four baseline schemes~(TDMA, MAT, GMAT, QMAT) and has a non-negligible
  power gain over the MAT-like schemes. This gain becomes more appealing in the medium-to-low SNR regime in which
  the MAT-like schemes are not even better than the simple TDMA scheme. Our scheme can still take advantage of the state feedback to achieve a
  better performance. This is mainly thanks to the flexibility over the duration of each phase~(time-slot) and the compression parameters as a function
  of the SNR, which is not possible with the MAT/GMAT schemes. The comparison to the QMAT scheme is even more
  interesting, since both the JSC scheme and
  QMAT are based on compression. We see that the performance gain over QMAT is almost 3~dB for $K=2$ and is up to $6$~dB for $K=3$. To analyze the
  causes of such a significant gain, we fix the $\beta_{j-1}=1+(j-1)(j+2)$ which corresponds to the same setting in the QMAT.  
  As can be seen from the plots in Fig.~\ref{ch-Kuser:fig:twoUvsQuantize} and
  Fig.~\ref{ch-Kuser:fig:threeUvsQuantize}, the JSC scheme still dominates
  the QMAT with a slight performance degradation from the case with optimized $\beta$'s.
  Such an observation suggests that the main performance gain of our scheme over the QMAT comes from the joint source-channel coding.
\end{itemize}


\section{Conclusion}\label{ch-Kuser:sec:Conclusion}
In this paper, we proposed a novel scheme for the general state-dependent $K$-user broadcast channel with state
feedback. The proposed scheme is based on joint source-channel coding and coded time-sharing. Thanks to the
systematic and scalable structure of this scheme, we managed to derive the corresponding achievable region in
terms of a reasonable number of parameters. Such region was then evaluated for two special cases, namely,    the
erasure BC and fading Gaussian BC. In particular, we showed that our results covered the previously known
capacity region for the erasure BC. In addition, for the fading Gaussian BC, we demonstrated through numerical
evaulation a non-negligible power gain of our scheme over the existing ones in the literature. We argued that
such a substantial performance gain comes from the use of joint source-channel coding which is still highly
theoretical. 
Practical implementation of such schemes would be an interesting direction to explore in the future.

\appendices

\section{Proof of Corollary~\ref{CH-KUSER:COL:GAU}}\label{ch-Kuser:appendix:Gaussianregion}
To study the symmetric rate, we let  $R_1 = \cdots = R_K = R_{\text{sym}}$. Due to the symmetry of the channel, it
is without loss of generality to consider receiver~1.  
In particular, we apply the RVs choice given in Section~\ref{ch-Kuser:sec:RsymN}, and compute the quantities in
\eqref{ch-Kuser:eq:IBofGBC2} and \eqref{ch-Kuser:eq:IBofGBC1}. {We define $\tilde{\Jcal}\defeq
\{1\}\cup\{j+1,\cdots,K\}$, $\rvVec{W}_j\defeq \Bigl[ \rvVec{Z}_{1}^T,
\hat{\rvVec{Z}}_{\Ucal_{j+1}\setminus\Jcal}^T,\cdots, \hat{\rvVec{Z}}_{\Ucal_{K}\setminus\Jcal}^T \Bigr]^T$.} Then, we have
\begin{align}
  \MoveEqLeft{I(V_{1}; \Y{1}{1}, \{\Yhat{1}{\Ucal}\}_{\Ucal \ni k}\cond \S{1}, \Q{1})} \nonumber \\
  &= \Pr(\Q{1}=(0,\{1\})) {I(V_{1}; \Y{1}{1}, \Yhat{1}{\{1,2\}}, \ldots, \Yhat{1}{\{1,K\}} \cond \S{1}, \Q{1}=(0,\{1\}))} \\
    &= K^{-1} I(\rvVec{V}_{1}; \rvMat{H}_{\Kcal} \rvVec{V}_{1} + \rvVec{W}_1 \cond \rvMat{H}_{\Kcal}) \\ 
    &= K^{-1} \mathbb{E}\left[ \log\det\left( \IM + \snr \, \rvMat{H}_{\Kcal}^H \pmb{\Lambda}_1 \rvMat{H}_{\Kcal}
    \right) \right]
    = K^{-1} a_1.
  \end{align}
  Similarly, assuming $i=j-1$ and $|\Ucal| = |\Jcal|+1=j+1$, we obtain
\begin{align}
{I(\V{i}{\Jcal};\Y{j}{1},\{\Yhat{j}{\Ucal}\}_{ \Ucal  \supset \Jcal} \cond \S{j}, \Q{j})} 
&= \binom{K}{j}^{-1} I(\rvVec{V}_{i\to\Jcal}; \rvMat{H}_{\tilde{\Jcal}} \rvVec{V}_{i\to\Jcal} + \rvVec{W}_j \cond \rvMat{H}_{\tilde{\Jcal}}) \nonumber  \\
    &= \binom{K}{j}^{-1} \mathbb{E} \left[ \log\det\left( \IM + \snr \, \rvMat{H}_{\tilde{\Jcal}}^H
  \pmb{\Lambda}_j \rvMat{H}_{\tilde{\Jcal}} \right) \right] = \binom{K}{j}^{-1} a_j, \nonumber\\
{I(\{V_{\Ical}\}_{\Ical \subset \Jcal}; \Yhat{i}{\Jcal} \cond \Y{i}{1}, \S{i}, \Q{i})} 
&= \sum_{l=1}^j \binom{K}{i}^{-1} I(\rvVec{V}_{\Ical}; \rvMat{H}_{l} \rvVec{V}_{\Ical}+\hat{\rvVec{Z}}_{\Jcal
\setminus \Ical} \cond \rvMat{H}_{1} \rvVec{V}_{\Ical}+\rvVec{Z}_1, \rvMat{H}_{1},\rvMat{H}_{l}) \nonumber\\
    &= \sum_{l=1}^j \binom{K}{i}^{-1} \mathbb{E} \left[ \log\det\left( \IM + \frac{\snr}{\beta_i} \rvMat{H}_l
  (\IM +  {\snr}\, \rvMat{H}_1^H \rvMat{H}_1 )^{-1} \rvMat{H}_l^H
  \right) \right] \nonumber\\
  &= \sum_{l=1}^j \binom{K}{i}^{-1} b_{l,j}. \label{ch-Kuser:eq:Gausrea2}
  \end{align}%
  Thus, we can rewrite the Gaussian rate region as below.
\begin{align}
  R_{\text{sym}} &\leq \alpha_1 K^{-1} a_1 \label{ch-Kuser:eq:tmp291}\\
0 &\leq \alpha_j \binom{K}{j}^{-1} a_j  - \alpha_{j-1}\binom{K}{j-1}^{-1}\sum_{l=1}^{j} b_{l,j}. \label{ch-Kuser:eq:Gaussianalphaj}
\end{align}

For a given set of $\{\beta_j\}$ and for a given SNR, $\{a_j\}$ and $\{b_{l,j}\}$ are fixed. Hence, the maximum
achievable rate of $R_\text{sym}$ can be reached when $\alpha_1$ is maximized. However, the selection of
$\{\alpha_i\}$ is subject to the constraint $\sum_{j=1}^{K}\alpha_j=1$. Applying \eqref{ch-Kuser:eq:Gaussianalphaj}
$K-1$ times, for $j=K,K-1,\ldots,2$, we obtain
  \begin{align}
    \alpha_1 \le c_1 \alpha_2 \le \cdots \le c_{K-1} \alpha_K,
    \label{ch-Kuser:eq:alphamultiineq}
  \end{align}%
  where $c_k$'s are nonnegative and can be found from \eqref{ch-Kuser:eq:Gaussianalphaj}. We argue that from 
    \eqref{ch-Kuser:eq:alphamultiineq} it is without loss of optimality to assume that
    \eqref{ch-Kuser:eq:Gaussianalphaj} should hold with equality for all $j$. To see this, let
    $\{\alpha_j,j\in\Kcal\}$ be such that $\sum_{j=1}^K \alpha_j=1$ and assume that some of the inequalities in
    \eqref{ch-Kuser:eq:alphamultiineq} are strict. Then, we can always reduce some of $\alpha_2,\ldots,\alpha_K$ 
and make sure that all the equalities hold, which would in turn lower the value of summation of $\alpha$, i.e.,
$\sum_{j=1}^{K}\alpha_j=c<1$. In this case, we can make the scaling $\alpha^*_j=\frac{\alpha_j}{c}$ which increases
$\alpha_1$ and also the objective function. With this reasoning, we conclude that the optimal value of $\alpha_1$
should be such that \eqref{ch-Kuser:eq:Gaussianalphaj} holds with equality for all $j$, which leads to 
\begin{align}
\alpha^*_1=\left(1+\sum_{j=2}^K \frac{\binom{K}{j}}{K} \prod_{t=2}^j \frac{\sum_{l\le t} b_{l,t} }{
a_{t}}\right)^{-1}.
\end{align}%
Plugging $\alpha^*_1$ back to~\eqref{ch-Kuser:eq:tmp291}, we obtain the optimal symmetric rate
\eqref{ch-Kuser:eq:RsymN} in the Gaussian case.

\section{Proof of Corollary~\ref{CH-KUSER:COL:ERA}}\label{ch-Kuser:appendix:EBC}
In the following, we first apply the RVs selected in Section~\ref{ch-Kuser:sec:EBC} and evaluate the quantities
in~\eqref{ch-Kuser:eq:IBofGBC2} and \eqref{ch-Kuser:eq:IBofGBC1}, that is, for user $k\in\Kcal$,
\begin{align}
I(V_{k}; \Y{1}{k}, \{\Yhat{1}{\Ucal}\}_{\Ucal \ni k}  \cond \S{1}, \Q{1}) 
	&= \Pr(\Q{1}_2 = k)(1-\delta_{\Kcal}) \log|\Xcal|,  \label{ch-Kuser:eq:EBCspe1}\\ 
I(\V{i}{\Jcal};\Y{j}{k},\{\Yhat{j}{\Ucal}\}_{ \Ucal \supset \Jcal} \cond \S{j}, \Q{j})
&= \Pr(\Q{j}=(i,\Jcal)) (1-\delta_{\bar{\Jcal}\cup\{k\}})\log|\Xcal|, \label{ch-Kuser:eq:EBCspe2}\\
&= \Pr(\Q{j}_1=i) \Pr(\Q{j}_2=\Jcal) (1-\delta_{\bar{\Jcal}\cup\{k\}})\log|\Xcal| \\
I(\{V_{\Ical}\}_{\Ical \subset \Jcal}; \Yhat{i}{\Jcal}  \cond \Y{i}{k}, \S{i}, \Q{i})
&=\sum_{\Ical\subset\Jcal}\Pr(\Q{i}_2=\Ical)I(V_{\Ical}; \Yhat{i}{\Jcal}  \cond \Y{i}{k}, \S{i}, \Q{i}_1, \Q{i}_2 =
\Ical)\nonumber\\
&= \sum_{\Ical\subset\Jcal,\Ical\ni k}\Pr(\Q{i}_2=\Ical, \Yhat{i}{\Jcal} \neq 0, \Y{i}{k}=\,?)H(V_{\Ical})\label{ch-Kuser:eq:EBCspe3}\\
&= \sum_{\Ical\subset\Jcal,\Ical\ni k}\Pr(\Q{i}_2=\Ical)\P_{\bar{\Jcal}\cup\{k\},\Jcal \setminus \Ical}
\log|\Xcal|, \label{ch-Kuser:eq:EBCspe4}
\end{align}%
where \eqref{ch-Kuser:eq:EBCspe1} can be interpreted as: receiver~k can recover the intended signal on $M_k$ unless
all the receivers are in erasure; \eqref{ch-Kuser:eq:EBCspe2} and \eqref{ch-Kuser:eq:EBCspe4} are obtained with the
same reasoning on the choice of the side information $\Yhat{i}{\Jcal}$ as defined in \eqref{eq:Yhat-EBC}. From
\eqref{ch-Kuser:eq:IBofGBC2} and \eqref{ch-Kuser:eq:EBCspe1}, we obtain \eqref{ch-Kuser:eq:citeforAppendixA}. 


Applying \eqref{ch-Kuser:eq:IBofGBC1}, we have, for all $i,j,k,\Jcal$ with $i<j$ and $k\in\Ical$, 
\begin{align*}
0\leq \alpha_j \Pr(\Q{j}_1=i) \Pr(\Q{j}_2=\Jcal) (1-\delta_{\bar{\Jcal}\cup\{k\}})\log|\Xcal| - \alpha_i \sum_{\Ical\subset\Jcal,\Ical\ni k}\Pr(\Q{i}_2=\Ical)\P_{\bar{\Jcal}\cup\{k\},\Jcal \setminus \Ical}\log|\Xcal|.
\end{align*}
We assume that the probabilities and $\alpha$'s are bounded away from zero or one so that the following inequality holds~($\forall i,j,k,\Ical$ with $i<j$ and $k\in\Jcal$).
\begin{align}
\frac{\alpha_i \sum_{\Ical\subset\Jcal,\Ical\ni k}\Pr(\Q{i}_2=\Ical)\P_{\bar{\Jcal}\cup\{k\},\Jcal \setminus \Ical}}{\alpha_j \Pr(\Q{j}_2=\Jcal) (1-\delta_{\bar{\Jcal}\cup\{k\}})} \leq \Pr(\Q{j}_1=i)  .
\end{align}
There are $j-1$ such inequalities for each given set of $(j,k,\Jcal)$. In addition, $\Pr(\Q{j}_1=i) $ should also fulfil $0\leq \Pr(\Q{j}_1=i) \leq 1$ and $\sum_{i=1}^{j-1} \Pr(\Q{j}_1=i)=1$. Then, we can eliminate the set $\{\Pr(\Q{j}_1=i)\}_{i=1\ldots j-1}$ with the Fourier–Motzkin elimination~(FME) to obtain $K-1$ constraints on the $\alpha$'s. Let us take $i=1$ as an example, as show below.
\begin{align}
\frac{\alpha_1 \sum_{\Ical\subset\Jcal,\Ical\ni k}\Pr(\Q{i}_2=\Ical)\P_{\bar{\Jcal}\cup\{k\},\Jcal \setminus \Ical}}{\alpha_j \Pr(\Q{j}_2=\Jcal) (1-\delta_{\bar{\Jcal}\cup\{k\}})} \leq & \Pr(\Q{j}_1=1),\\
0\leq & \Pr(\Q{j}_1=1),\\
& \Pr(\Q{j}_1=1) \leq 1-\sum_{i=2}^{j-1} \Pr(\Q{j}_1=i).
\end{align}
We obtain 
\begin{align}
\frac{\alpha_1 \sum_{\Ical\subset\Jcal,\Ical\ni k}\Pr(\Q{i}_2=\Ical)\P_{\bar{\Jcal}\cup\{k\},\Jcal \setminus \Ical}}{\alpha_j \Pr(\Q{j}_2=\Jcal) (1-\delta_{\bar{\Jcal}\cup\{k\}})} \leq  1-\sum_{i=2}^{j-1} \Pr(\Q{j}_1=i).
\end{align}
Therefore, we can apply $j-1$ times the same type FME and the rate constraints after these FME are
\begin{align}
  0&\leq \alpha_j \Pr(\Q{j}_2=\Jcal) (1-\delta_{\bar{\Jcal}\cup\{k\}}) -
  \sum_{i=1}^{j-1}\alpha_i\sum_{\Ical\subset\Jcal,\Ical\ni k}\Pr(\Q{i}_2=\Ical)\P_{\bar{\Jcal}\cup\{k\},\Jcal\setminus\Ical},  \label{ch-Kuser:eq:alphaj}
\end{align}
for $j=2,\ldots,K$. This completes the proof.


\section{Proof of Proposition~\ref{CH-KUSER:CORO:SYMSPIND}}\label{ch-Kuser:appendix:EBCsym}
We define 
$\mu_{\Jcal} \defeq \alpha_{j}P(\Q{j}_2=\Jcal)$ as the normalized length such that $\sum_{\Jcal:
\Jcal\subseteq\Kcal} \mu_{\Jcal} = 1$. The rate region in Corollary~\ref{CH-KUSER:COL:ERA} can be rewritten as
\begin{align}
R_{k} &\leq \mu_{\{k\}} (1-\delta_{\Kcal}) \log \vert \Xcal \vert,\label{appendB:eq:Rk}\\
\mu_{\Jcal}  &\geq \max_{k:\;k\in\Jcal} \left\{\sum_{\Ical:\;k\in\Ical\subset\Jcal}
\frac{\P_{\bar{\Jcal}\cup\{k\},\Jcal\setminus\Ical}}{1- \delta_{\bar{\Jcal}\cup\{k\}}}\mu_{\Ical}\right\}. \label{appendB:eq:alphaJ}
\end{align}
First, we show that \eqref{appendB:eq:alphaJ} should be satisfied with equality for all $\Jcal\subseteq \Kcal$.
It follows the similar steps as those in Appendix~\ref{ch-Kuser:appendix:Gaussianregion}. We assume that there
exist $\{\mu_{\Jcal}\}_{\Jcal\subseteq\Kcal}$ such that 
$\sum_{\Jcal\subseteq \Kcal}\mu_{\Jcal} = 1$ holds, and that the inequality 
\eqref{appendB:eq:alphaJ} is strict for some ${\Jcal^{'}}$. In this case, one can always reduce the value of
$\mu_{\Jcal^{'}}$ to achieve equality in \eqref{appendB:eq:alphaJ}, which leads to a smaller sum $\sum
\mu_{\Jcal} = c < 1$. Then, we can scale the whole set $\{\mu_{\Jcal}\}_{\Jcal\subseteq\Kcal}$ by $c$ to make
sure that $\sum \mu_{\Jcal} = 1$ holds again. This will increase the values of $\{\mu_{\{k\}}\}_{k\in\Kcal}$ by
a factor $\frac{1}{c}$, and will increase simultaneously the rate in \eqref{appendB:eq:Rk}. Therefore, it is
without loss of optimality to assume that  \eqref{appendB:eq:alphaJ} is satisfied with equality. 

Then, we focus on the symmetric EBC,  for which the optimal normalized lengths are characterized by the following lemma.
\begin{lemma}\label{ch-Kuser:lm:property}
  Let us define $k^*_{\Jcal} \defeq \min_{k\in\Jcal} k$ and 
  \begin{align}
    \mu_{k,\Jcal} \defeq \sum_{\Ical:\;k\in\Ical\subset\Jcal}
    \frac{\P_{\bar{\Jcal}\cup\{k\},\Jcal\setminus\Ical}}{1- \delta_{\bar{\Jcal}\cup\{k\}}}\mu_{\Ical}.
    \label{eq:tmp91}
  \end{align}%
For a symmetric EBC, the optimal $\mu_{\Jcal}$, $\Jcal\subseteq\Kcal$, is 
\begin{align}
\mu_{\Jcal} = \mu_{k^{*}_{\Jcal},\Jcal}. \label{ch-Kuser:eq:property1}
\end{align}
\end{lemma}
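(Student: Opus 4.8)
The plan is to characterise the optimal $\{\mu_\Jcal\}$ in the maximisation of $R_{\text{sym}}$ over the region of Corollary~\ref{CH-KUSER:COL:ERA} and show that it has the stated recursive structure. First I would substitute the linear expression $\mu_{k,\Jcal}=\sum_{\Ical:\,k\in\Ical\subset\Jcal}\frac{\P_{\bar\Jcal\cup\{k\},\Jcal\setminus\Ical}}{1-\delta_{\bar\Jcal\cup\{k\}}}\mu_\Ical$ into the constraint $\mu_\Jcal\ge\max_{k\in\Jcal}\mu_{k,\Jcal}$ and note that this is nothing but a finite collection of \emph{linear} inequalities in the variables $\{\mu_\Jcal\}$ (one per pair $(k,\Jcal)$ with $k\in\Jcal$), together with $\mu_\Jcal\ge 0$ and $\sum_\Jcal\mu_\Jcal=1$. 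Hence the feasible set is a polytope and the objective (which is $(1-\delta_{\Kcal})\log|\Xcal|\,\min_k\mu_{\{k\}}$) is concave. Because the EBC is symmetric, $\P_{\Fcal,\Tcal}$ and $\delta_\Ucal$ depend only on the cardinalities of the sets involved, so both the polytope and the objective are invariant under relabelling the users by any permutation $\pmb\pi$ of $\Kcal$. Averaging an optimal solution $\{\mu^*_\Jcal\}$ over the symmetric group then yields a feasible solution (convexity of the polytope) with no smaller objective value (concavity of $\min_k(\cdot)$), so there is an optimal solution that is permutation-symmetric, i.e.\ $\mu^*_\Jcal=\nu_{|\Jcal|}$ for some numbers $\nu_1,\dots,\nu_K$; this is the solution we take.

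Second, I would evaluate $\mu_{k,\Jcal}$ at such a symmetric point. Writing $j=|\Jcal|$ and grouping the sum by $i=|\Ical|$, the number of $\Ical$ with $k\in\Ical\subset\Jcal$ and $|\Ical|=i$ equals $\binom{j-1}{i-1}$, which does not depend on $k\in\Jcal$; moreover $\bar\Jcal\cup\{k\}$ has cardinality $K-j+1$ and $\Jcal\setminus\Ical$ has cardinality $j-i$ and they are disjoint, so by symmetry $\P_{\bar\Jcal\cup\{k\},\Jcal\setminus\Ical}$ and $\delta_{\bar\Jcal\cup\{k\}}$ depend only on $j$ and $i$. Therefore $\mu_{k,\Jcal}=\sum_{i=1}^{j-1}\binom{j-1}{i-1}\frac{\P_{\bar\Jcal\cup\{k\},\Jcal\setminus\Ical}}{1-\delta_{\bar\Jcal\cup\{k\}}}\,\nu_i$ is the same for every $k\in\Jcal$; in particular the maximum over $k\in\Jcal$ is attained at $k=k^*_{\Jcal}=\min_{k\in\Jcal}k$. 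Combining this with the fact already established at the start of Appendix~\ref{ch-Kuser:appendix:EBCsym} — that at the optimum $\mu_\Jcal\ge\max_{k\in\Jcal}\mu_{k,\Jcal}$ must hold with equality, since otherwise one shrinks $\mu_\Jcal$, rescales all the $\mu$'s, and strictly increases $\min_k\mu_{\{k\}}$ — gives $\mu_\Jcal=\max_{k\in\Jcal}\mu_{k,\Jcal}=\mu_{k^*_{\Jcal},\Jcal}$, which is \eqref{ch-Kuser:eq:property1}. As a by-product one obtains the recursion $\nu_j=\sum_{i=1}^{j-1}\binom{j-1}{i-1}\frac{\P_{\bar\Jcal\cup\{k\},\Jcal\setminus\Ical}}{1-\delta_{\bar\Jcal\cup\{k\}}}\,\nu_i$, which is precisely what the remaining steps need in order to solve for the closed-form region in Corollary~\ref{CH-KUSER:CORO:SYMSPIND}.

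The main obstacle is the first step: making it rigorous that one may restrict attention to permutation-symmetric $\{\mu_\Jcal\}$. The key point to get right is that the ``$\max$ over $k$'' in the constraint is a \emph{conjunction} of linear inequalities, not a single nonlinear constraint, so the feasible set is genuinely a convex polytope and the averaging/symmetrisation argument applies verbatim. A secondary subtlety is that the maximisation should be set up over $\{\sum_\Jcal\mu_\Jcal\le 1\}$ and then normalised to $\sum_\Jcal\mu_\Jcal=1$; this rescaling freedom is exactly what powers both the ``equality at the optimum'' reduction and the preservation of optimality under averaging. The combinatorial bookkeeping in the second step — the $\binom{j-1}{i-1}$ count and the cardinality-only dependence of $\P$ and $\delta$ — is elementary.
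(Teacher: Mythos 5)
There is a genuine gap: your symmetrisation step proves a strictly weaker statement than the lemma actually needs to be. By averaging over the symmetric group you restrict attention to the maximisation of the \emph{symmetric} rate, and at a permutation-symmetric point ($\mu_{\Jcal}=\nu_{|\Jcal|}$) the quantity $\mu_{k,\Jcal}$ is the same for every $k\in\Jcal$, so the identification of the maximiser as $k^*_{\Jcal}=\min_{k\in\Jcal}k$ becomes vacuous. But the lemma is invoked to prove the \emph{entire} capacity region of Corollary~\ref{CH-KUSER:CORO:SYMSPIND}, whose boundary contains asymmetric rate tuples: the proof in Appendix~\ref{ch-Kuser:appendix:EBCsym} keeps the singleton values $\mu_{\{1\}}\ge\mu_{\{2\}}\ge\cdots\ge\mu_{\{K\}}$ arbitrary (only ordered, without loss of generality), feeds the resulting $\mu_{\Jcal}=\mu_{k^*_{\Jcal},\Jcal}$ into Lemma~\ref{ch-Kuser:lm:recursion}, and ends with the weighted-sum constraint $\sum_k R_k/(1-\delta_{\{1,\ldots,k\}})\le\log|\Xcal|$ for \emph{general} $(R_1,\ldots,R_K)$, the permutations being handled at the very end. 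With your version one only recovers the single symmetric-rate point of the region, and the downstream recursion (your $\nu_j$ identity) is likewise only the fully symmetric special case of what Lemma~\ref{ch-Kuser:lm:recursion} requires.

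The nontrivial content you have bypassed is precisely the claim that, when the singleton $\mu$'s are unequal but ordered, the maximum of $\mu_{k,\Jcal}$ over $k\in\Jcal$ is attained at the smallest index. The paper proves this by induction on $|\Jcal|$: it establishes the monotonicity property $\mu_{\Jcal_1}\ge\mu_{\Jcal_2}$ whenever $k^*_{\Jcal_1}\le k^*_{\Jcal_2}$ and $\Jcal_1\setminus\{k^*_{\Jcal_1}\}=\Jcal_2\setminus\{k^*_{\Jcal_2}\}$, and then compares $\sum_{k'\in\Jcal'\subset\Lcal}\mu_{\Jcal'}$ with $\sum_{k^*_{\Lcal}\in\Jcal\subset\Lcal}\mu_{\Jcal}$ term by term via the substitution map $\Jcal'=\Jcal\setminus\{k^*_{\Lcal}\}\cup\{k'\}$ (channel symmetry enters only to make the coefficients $\P_{K-j+1,j-i}/(1-\delta_{K-j+1})$ depend on cardinalities, not to force the $\mu$'s themselves to be symmetric). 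To repair your argument you would either have to carry out this induction/bijection step for arbitrary ordered singleton profiles, or else supply a separate argument (e.g., a time-sharing/convexity argument over subset-symmetric operating points) showing that achieving the symmetric point alone implies the whole region; neither is present in the proposal, and the latter is not the route the paper takes.
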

\begin{proof}
  We prove the lemma by induction on $j$. Note that by definition $\mu_{k^*_{\Ical},\Ical} =
  \mu_{k^*_{\Jcal},\Ical}$ for $\Ical\subset\Jcal$. We define two sets $\Jcal_1$ and $\Jcal_2$ that verify
$\vert\Jcal_1\vert=\vert\Jcal_2\vert=j=\vert\Jcal\vert \geq 2$ and
$\Jcal_1\setminus\{k^*_{\Jcal_1}\}=\Jcal_2\setminus\{k^*_{\Jcal_2}\}$.
If $k^*_{\Jcal_1} \leq k^*_{\Jcal_2}$, it
can be proved with induction that $\mu_{\Jcal_1} \geq \mu_{\Jcal_2}$. To initiate the induction, we assume
that the maximal $\mu_{\Ical}$ is obtained with $k^*_{\Ical}$ and $\mu_{\Ical_1} \geq \mu_{\Ical_2}$ is
correct with the analogously defined $\Ical_1,\Ical_2$. Due to the channel's symmetry, we assume without loss of
generality that 
$\mu_{\{1\}}\geq\mu_{\{2\}}\geq\cdots\geq\mu_{\{K\}}$ and we use abusively the following notations in this appendix
$\delta_{\bar{\Jcal}\cup\{k^*_{\Jcal}\}}=\delta_{K-j+1}$ and
$\P_{\bar{\Jcal}\cup\{k^*_{\Jcal}\},\Jcal\setminus\Ical}=\P_{K-j+1,j-i}$. Hence, \eqref{eq:tmp91} and
\eqref{ch-Kuser:eq:property1}
reduce to $\mu_{k^{*}_{\Jcal},\Jcal} = \frac{1}{1- \delta_{K-j+1}} \sum_{k^*_{\Jcal}
\in\Ical\subset\Jcal} \mu_{k^*_{\Ical},\Ical} \P_{K-j+1,j-i}$. 

As Lemma~\ref{ch-Kuser:lm:property} focuses on $j\ge 2$ case, we start by verify the case with $j=2$. We assume
that $\mu_{\Jcal_1}=\mu_{\{t_1,t_3\}}$ and $\mu_{\Jcal_{2}}=\mu_{\{t_2,t_3\}}$ where $t_1\leq t_2\leq t_3$. Then,
we notice that $\mu_{\Jcal_m}=\mu_{\{t_m,t_3\}} = \frac{\P_{K-1,1}}{1- \delta_{K-1}} \max_{k\in\{t_m,t_3\}}
\mu_{\{k\}} = \frac{\P_{K-1,1}}{1- \delta_{K-1}}  \mu_{\{t_m\}}$, $m=1,2$ and $\mu_{\Jcal_1}=\frac{\P_{K-1,1}}{1-
\delta_{K-1}}  \mu_{\{t_1\}}\geq \frac{\P_{K-1,1}}{1- \delta_{K-1}}  \mu_{\{t_2\}} =\mu_{\Jcal_2}$. 

Let us assume that \eqref{ch-Kuser:eq:property1} and $\mu_{\Jcal_1} \geq \mu_{\Jcal_2}$ hold for any
$\Jcal_m\subset \Kcal$ with $ \vert\Jcal_m\vert = j = l-1$~($3 \leq l \leq K$), $k^*_{\Jcal_1} \leq k^*_{\Jcal_2}$,
and $\Jcal_1\setminus\{k^*_{\Jcal_1}\}=\Jcal_2\setminus\{k^*_{\Jcal_2}\}$, for $m=1,2$. We show that
\eqref{ch-Kuser:eq:property1} and $\mu_{\Lcal_1} \geq \mu_{\Lcal_2}$ hold for any $\Jcal_m\subset \Lcal_m \subseteq
\Kcal$ with $\vert\Lcal_m\vert = l$, $k^*_{\Lcal_1} \leq k^*_{\Lcal_2}$, and $\Lcal_1
\setminus\{k^*_{\Lcal_1}\}=\Lcal_2\setminus\{k^*_{\Lcal_2}\}$, for $m=1,2$. Let us take $\mu_{\Lcal_1}$ as an
example. The $\mu_{\Lcal_1}$ writes as
\begin{align}
\mu_{\Lcal_1}  &= \frac{\P_{K-l+1,l-j}}{1- \delta_{K-l+1}}
\max_{k\in\Lcal_1} \sum_{k\in\Jcal_1\subset\Lcal_1} \mu_{\Jcal_1} \\
&= \frac{\P_{K-l+1,l-j}}{1- \delta_{K-l+1}} \max
\left\{\sum_{k^*_{\Lcal_1}\in\Jcal_1\subset\Lcal_1} \mu_{\Jcal_1} , \left\{ \sum_{k'\in\Jcal'\subset\Lcal_1} \mu_{\Jcal'} \right\}_{k'\neq k^*_{\Lcal_1}}\right\}\\
&= \frac{\P_{K-l+1,l-j}}{1- \delta_{K-l+1}} \max_{k'\neq k^*_{\Lcal_1},\atop k'\in\Lcal_1}\left\{\max
\left\{\sum_{k^*_{\Lcal_1}\in\Jcal_1\subset\Lcal_1} \mu_{\Jcal_1} , \sum_{k'\in\Jcal'\subset\Lcal_1} \mu_{\Jcal'}
\right\}\right\}\label{ch-Kuser:eq:syminduction1}\\
&= \frac{\P_{K-l+1,l-j}}{1- \delta_{K-l+1}} \sum_{k^*_{\Lcal_1}\in\Jcal_1\subset\Lcal_1}
\mu_{\Jcal_1}. \label{ch-Kuser:eq:syminduction2}
\end{align}
To prove \eqref{ch-Kuser:eq:syminduction2}, we consider four types of subsets of $\Lcal_1$
depending on whether $k^*_{\Lcal_1}$ and $k'$ are included in the subset. In particular, a subset including
both $k^*_{\Lcal_1}$ and $k'$ appears in both terms inside the inner maximization of
\eqref{ch-Kuser:eq:syminduction1} which yields $k^*_{\Jcal'}=k^*_{\Jcal_1}=k^*_{\Lcal_1}$, while a subset containing
neither $k^*_{\Lcal_1}$ nor $k'$ does not appear inside the inner maximization of \eqref{ch-Kuser:eq:syminduction1}.
Note that the other subsets have either $k^*_{\Lcal_1}$ or $k'$ such that $k'\not\in\Jcal_1$ and
$k^*_{\Lcal_1}\not\in\Jcal'$. There always exists a mapping that projects a subset $\Jcal_1$ including
$k^*_{\Lcal_1}$ into another subset $\Jcal'$ including $k'$ by substituting $k^*_{\Lcal_1}$ for $k'$, i.e.,
$\Jcal'=\Jcal_1 \setminus \{k^*_{\Lcal_1}\}\cup\{k'\}$. Thus, $\sum_{k'\in\Jcal'\subset\Lcal_1} \mu_{\Jcal'}\leq
\sum_{k^*_{\Lcal_1}\in\Jcal_1\subset\Lcal_1} \mu_{\Jcal_1} $ holds for any $k'\neq k^*_{\Lcal_1}$ given that
$k^*_{\Lcal_1}\leq k'$ and the property $\mu_{\Jcal_1} \geq \mu_{\Jcal_2}$ is true for $k^*_{\Jcal_1} \le k^*_{\Jcal_2}$. Therefore, \eqref{ch-Kuser:eq:syminduction2} holds. The proof completes by
\begin{align}
\sum_{\Ical\subset\Lcal_1\setminus\{k^*_{\Lcal_1}\}} \mu_{\Ical\cup\{k^*_{\Lcal_1}\}}  \geq
\sum_{\Ical\subset\Lcal_2\setminus\{k^*_{\Lcal_2}\}} \mu_{\Ical\cup\{k^*_{\Lcal_2}\}} \Longleftrightarrow
\mu_{\Lcal_1} \ge  \mu_{\Lcal_2},
\end{align}
where the inequality follows by identifying $\Jcal_1=\Ical\cup\{k^*_{\Lcal_1}\}$,
$\Jcal_2=\Ical\cup\{k^*_{\Lcal_2}\}$ and the property $\mu_{\Jcal_1} \geq \mu_{\Jcal_2}$ is true when $k^*_{\Jcal_1}=k^*_{\Lcal_1} \le k^*_{\Lcal_2}= k^*_{\Jcal_2}$ and $\Jcal_1\setminus\{k^*_{\Jcal_1}\}=\Ical=\Jcal_2\setminus\{k^*_{\Jcal_2}\}$.
\end{proof}

In the following, we show that the $\mu$'s in Lemma~\ref{ch-Kuser:lm:property} lead to the capacity region
\eqref{CH-KUSER:CORO:SYMSPIND}. From
Lemma 10 in \cite{gatzianas2013multiuser}, we know that, for any disjoint sets $\Fcal, \Tcal \subseteq \Kcal$,
\begin{align}
\P_{\Fcal,\Tcal}=\sum_{\Ucal\subseteq\Tcal} (-1)^{\vert
\Ucal \vert} \delta_{\Fcal \cup \Ucal} =\sum_{\Ucal\subseteq\Tcal} (-1)^{\vert \Ucal \vert+1} (1-\delta_{\Fcal \cup
\Ucal}).
\end{align}%
Then, we extend $\P_{\bar{\Jcal}\cup\{k\},\Jcal\setminus\Ical}$ analogously such that \eqref{ch-Kuser:eq:property1} writes as
\begin{align}
  \mu_{\Jcal} &= (1- \delta_{\bar{\Jcal}\cup \{k^{*}_{\Jcal}\}})^{-1}\sum_{\Ical:\;k^*_{\Jcal}
  \in\Ical\subset\Jcal} \mu_{\Ical} \sum_{\Ucal:\;\Ucal\subseteq\Jcal\setminus\Ical} (-1)^{\vert \Ucal \vert+1}
  (1-\delta_{\bar{\Jcal} \cup \{k^{*}_{\Jcal}\} \cup \Ucal}) \nonumber  \\
  &= (1- \delta_{\bar{\Jcal} \cup \{k^{*}_{\Jcal}\}})^{-1}\sum_{k^*_{\Jcal} \in\Ical\subset\Jcal}  \mu_{\Ical}
  \bigg( -(1-\delta_{\bar{\Jcal} \cup \{k^{*}_{\Jcal}\} })+ \sum_{\phi \neq \Ucal\subseteq\Jcal \setminus\Ical}
  (-1)^{\vert \Ucal \vert+1} (1-\delta_{\bar{\Jcal} \cup \{k^{*}_{\Jcal}\} \cup \Ucal})\bigg) \nonumber \\
&= \frac{\sum_{\phi \neq \Ucal\subseteq\Jcal \setminus\{k^*_{\Jcal}\}} \sum_{k^*_{\Jcal}
\in\Ical\subseteq\Jcal\setminus\Ucal} \mu_{\Ical}(-1)^{\vert \Ucal \vert+1} (1-\delta_{\bar{\Jcal} \cup
\{k^{*}_{\Jcal}\} \cup \Ucal})}{1- \delta_{\bar{\Jcal} \cup \{k^{*}_{\Jcal}\} }} -\sum_{k^*_{\Jcal}
\in\Ical\subset\Jcal} \mu_{\Ical},   \label{ch-Kuser:eq:alphakJrecursion}
\end{align}
where we change the summation order over $\Ical$ and $\Ucal$ to obtain \eqref{ch-Kuser:eq:alphakJrecursion}. We
simplify \eqref{ch-Kuser:eq:alphakJrecursion} by adding $\sum_{k^*_{\Jcal} \in\Ical\subset\Jcal} \mu_{\Ical}$ to both sides of \eqref{ch-Kuser:eq:alphakJrecursion}, as shown below. 
\begin{align}
\sum_{\Ical:\;k^*_{\Jcal} \in\Ical\subseteq\Jcal} \mu_{\Ical}  = \frac{\sum_{\Ucal:\;\phi \neq
\Ucal\subseteq\Jcal\setminus\{k^*_{\Jcal}\}} (-1)^{\vert \Ucal \vert+1} (1-\delta_{\bar{\Jcal}\cup\{k^*_{\Jcal}\}
\cup \Ucal}) \sum_{\Ical:\;k^*_{\Jcal} \in\Ical\subseteq\Jcal\setminus\Ucal} \mu_{\Ical}}{1-
\delta_{\bar{\Jcal}\cup\{k^*_{\Jcal}\}}}.   \label{ch-Kuser:eq:recursion2}
\end{align}

Next, we show that the $\mu$'s satisfying the recursive relation, i.e., \eqref{ch-Kuser:eq:recursion2}, also verify Lemma~\ref{ch-Kuser:lm:recursion}.
\begin{lemma}\label{ch-Kuser:lm:recursion}
For a given $k$ and for any $\mathcal{W}_{k}$ such that $k\in\mathcal{W}_k \subseteq \{k,k+1,\cdots,K\}$, we have 
\begin{align}
\sum_{\Ical:\;k\in\Ical\subseteq \mathcal{W}_k} \mu_{\Ical} =
\frac{(1-\delta_{\Kcal})\mu_{\{k\}}}{1-\delta_{\Kcal\setminus\mathcal{W}_k\cup\{k\}}}. \label{ch-Kuser:eq:summationrecursion}
\end{align}
\end{lemma}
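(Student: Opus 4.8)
The plan is to prove Lemma~\ref{ch-Kuser:lm:recursion} by induction on $|\mathcal{W}_k|$, with the recursion \eqref{ch-Kuser:eq:recursion2}---which the optimal lengths satisfy in a symmetric EBC---as the only engine. First I would observe that, since $\mathcal{W}_k\subseteq\{k,k+1,\ldots,K\}$ and $k\in\mathcal{W}_k$, the index $k$ is the smallest element of $\mathcal{W}_k$, i.e.\ $k=k^{*}_{\mathcal{W}_k}$, so \eqref{ch-Kuser:eq:recursion2} applies verbatim with $\Jcal=\mathcal{W}_k$. Writing $S(\mathcal{W})\defeq\sum_{\Ical:\,k\in\Ical\subseteq\mathcal{W}}\mu_{\Ical}$ for any $\mathcal{W}$ with $k\in\mathcal{W}\subseteq\{k,\ldots,K\}$, the recursion reads
\begin{align}
S(\mathcal{W}_k)\bigl(1-\delta_{(\Kcal\setminus\mathcal{W}_k)\cup\{k\}}\bigr)
=\sum_{\emptyset\neq\Ucal\subseteq\mathcal{W}_k\setminus\{k\}}(-1)^{|\Ucal|+1}\bigl(1-\delta_{(\Kcal\setminus\mathcal{W}_k)\cup\{k\}\cup\Ucal}\bigr)\,S(\mathcal{W}_k\setminus\Ucal).\nonumber
\end{align}
The set-algebra fact that makes this usable is that, for every $\Ucal\subseteq\mathcal{W}_k\setminus\{k\}$,
\begin{align}
(\Kcal\setminus\mathcal{W}_k)\cup\{k\}\cup\Ucal=\bigl(\Kcal\setminus(\mathcal{W}_k\setminus\Ucal)\bigr)\cup\{k\},\nonumber
\end{align}
so the $\delta$-factor multiplying $S(\mathcal{W}_k\setminus\Ucal)$ inside the sum is exactly the denominator that appears in the formula \eqref{ch-Kuser:eq:summationrecursion} claimed for $S(\mathcal{W}_k\setminus\Ucal)$; moreover $\mathcal{W}_k\setminus\Ucal$ still contains $k$ (since $k\notin\Ucal$), lies in $\{k,\ldots,K\}$, and is strictly smaller than $\mathcal{W}_k$, so the inductive hypothesis is applicable to it.

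For the base case $\mathcal{W}_k=\{k\}$ I would simply note $S(\{k\})=\mu_{\{k\}}$ and check that the right-hand side of \eqref{ch-Kuser:eq:summationrecursion} equals $(1-\delta_{\Kcal})\mu_{\{k\}}/(1-\delta_{\Kcal})=\mu_{\{k\}}$. For the inductive step, assuming \eqref{ch-Kuser:eq:summationrecursion} for every admissible $\mathcal{W}$ with $k\in\mathcal{W}\subsetneq\mathcal{W}_k$, I would substitute the hypothesis $S(\mathcal{W}_k\setminus\Ucal)=\frac{(1-\delta_{\Kcal})\mu_{\{k\}}}{1-\delta_{(\Kcal\setminus(\mathcal{W}_k\setminus\Ucal))\cup\{k\}}}$ into the recursion above. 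Using the set identity, every factor $1-\delta_{(\Kcal\setminus\mathcal{W}_k)\cup\{k\}\cup\Ucal}$ cancels against the corresponding denominator, leaving
\begin{align}
S(\mathcal{W}_k)\bigl(1-\delta_{(\Kcal\setminus\mathcal{W}_k)\cup\{k\}}\bigr)=(1-\delta_{\Kcal})\,\mu_{\{k\}}\sum_{\emptyset\neq\Ucal\subseteq\mathcal{W}_k\setminus\{k\}}(-1)^{|\Ucal|+1}.\nonumber
\end{align}
With $m\defeq|\mathcal{W}_k|-1\ge1$, the alternating sum is $\sum_{l=1}^{m}\binom{m}{l}(-1)^{l+1}=1-\sum_{l=0}^{m}\binom{m}{l}(-1)^{l}=1$, and dividing through by $1-\delta_{(\Kcal\setminus\mathcal{W}_k)\cup\{k\}}$ (which we take to be positive under the standing non-degeneracy assumption on the erasure probabilities) yields exactly \eqref{ch-Kuser:eq:summationrecursion}.

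The hard part is purely bookkeeping rather than any new idea: one must (i) cast \eqref{ch-Kuser:eq:recursion2} into the self-referential form above, (ii) verify the identity $(\Kcal\setminus\mathcal{W}_k)\cup\{k\}\cup\Ucal=(\Kcal\setminus(\mathcal{W}_k\setminus\Ucal))\cup\{k\}$ so the inductive hypothesis substitutes without residue, and (iii) recognize that after the cancellation what survives is precisely the alternating binomial sum $\sum_{\emptyset\neq\Ucal}(-1)^{|\Ucal|+1}=1$, which is what re-creates the coefficient $1-\delta_{(\Kcal\setminus\mathcal{W}_k)\cup\{k\}}$ on the left with the correct sign. Beyond the recursion, the symmetry already used to derive it, and this one binomial identity, no further analytic input is needed; specializing afterwards to $\mathcal{W}_k=\{k,\ldots,K\}$ (so that $(\Kcal\setminus\mathcal{W}_k)\cup\{k\}=\{1,\ldots,k\}$) is what will feed into the final derivation of the capacity region~\eqref{ch-Kuser:eq:Csym}.
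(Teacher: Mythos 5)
Your proposal is correct and follows essentially the same route as the paper: induction on $|\mathcal{W}_k|$, applying the recursion \eqref{ch-Kuser:eq:recursion2} with $\Jcal=\mathcal{W}_k$ and $k^*_{\Jcal}=k$, substituting the inductive hypothesis so that the factors $1-\delta_{(\Kcal\setminus\mathcal{W}_k)\cup\{k\}\cup\Ucal}$ cancel, and finishing with the alternating binomial sum equal to $1$. Your explicit statement of the set identity $(\Kcal\setminus\mathcal{W}_k)\cup\{k\}\cup\Ucal=(\Kcal\setminus(\mathcal{W}_k\setminus\Ucal))\cup\{k\}$, which the paper uses only implicitly, is a welcome clarification but not a different argument.
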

\begin{proof}
The proof is done by induction on the cardinality of $\mathcal{W}_k$. For arbitrary $k$ and
$\vert\mathcal{W}_k\vert=1$~(i.e., $\mathcal{W}_k=\{k\}$), one can easily verify \eqref{ch-Kuser:eq:summationrecursion} is true. 
We now assume that \eqref{ch-Kuser:eq:summationrecursion} holds for all $\mathcal{W}_k$ with
$\vert\mathcal{W}_k\vert \leq w$ and show that it also holds for all $\mathcal{W}_k$ with $\vert \mathcal{W}_k
\vert=w+1$. Note that $\Jcal$ is a set whose minimal element is $k^*_{\Jcal}$ and $\mathcal{W}_k$ is a set with its
minimum being $k$. Since \eqref{ch-Kuser:eq:recursion2} is true for any $\Jcal$, we can substitute $k$ and
$\mathcal{W}_k$ for $k^*_{\Jcal}$ and $\Jcal$, respectively, in \eqref{ch-Kuser:eq:recursion2} and have
\begin{align}
\sum_{k \in \Ical\subseteq\mathcal{W}_k} \mu_{\Ical}  & = (1- \delta_{\Kcal \setminus \mathcal{W}_k \cup
\{k\}})^{-1} \sum_{\phi \neq \Ucal\subseteq \mathcal{W}_k\setminus\{k\}} (-1)^{\vert \Ucal \vert+1}
(1-\delta_{\Kcal\setminus\mathcal{W}_k\cup\{k\} \cup \Ucal}) \sum_{k \in\Ical\subseteq \mathcal{W}_k \setminus
\Ucal} \mu_{\Ical}  \nonumber \\
& =   (1- \delta_{\Kcal \setminus \mathcal{W}_k \cup \{k\}})^{-1} \sum_{\phi \neq \Ucal\subseteq
\mathcal{W}_k\setminus\{k\}} (-1)^{\vert \Ucal \vert+1} (1-\delta_{\Kcal\setminus\mathcal{W}_k\cup\{k\} \cup
\Ucal}) \frac{(1-\delta_{\Kcal})\mu_{\{k\}}}{(1-\delta_{\Kcal\setminus\mathcal{W}_k\cup\{k\} \cup \Ucal})} \nonumber\\
& =  \frac{(1-\delta_{\Kcal})\mu_{\{k\}}}{1- \delta_{\Kcal\setminus\mathcal{W}_k\cup\{k\}}}  \sum_{\phi \neq
\Ucal\subseteq \mathcal{W}_k \setminus\{k\}} (-1)^{\vert \Ucal \vert+1} \nonumber \\
&=  \frac{(1-\delta_{\Kcal})\mu_{\{k\}}}{1- \delta_{\Kcal\setminus\mathcal{W}_k\cup\{k\} \cup \Ucal}},\label{ch-Kuser:eq:recursumalphaJ}
\end{align}
where second equality holds because of the assumption and \eqref{ch-Kuser:eq:recursumalphaJ} follows the binomial theorem.
\end{proof}

We sum up the $\mu_{\Jcal}$'s~($\sum_{\Jcal\subseteq \Kcal}\mu_{\Jcal} = 1$) to obtain $1 = \sum_{\Jcal\subseteq
\Kcal}\mu_{\Jcal}  = \sum_{k=1}^{K}\sum_{k\in\Ical\subseteq \{k,k+1,\cdots,K\}} \mu_{\Ical} = \sum_{k=1}^{K}
\frac{(1-\delta_{\Kcal})\mu_{\{k\}}}{1- \delta_{\Kcal\setminus\{k,k+1,\cdots,K\}\cup\{k\}}}=\sum_{k=1}^{K}
\frac{(1-\delta_{\Kcal})\mu_{\{k\}}}{1- \delta_{\{1,2,\cdots,k\}}}$, where the third equality comes from
Lemma~\ref{ch-Kuser:lm:recursion}. Additionally, we rewrite \eqref{appendB:eq:Rk} as
$\frac{R_{k}}{(1-\delta_{\Kcal})\log \vert \Xcal \vert} \leq \mu_{\{k\}}$, and apply this inequality to the above
sum, which yields $\log \vert \Xcal \vert \geq \sum_{k=1}^{K} \frac{R_{k}}{1- \delta_{\{1,2,\cdots,k\}}}$. 
The above proof also holds if we swap the roles of the users according to the permutation $\pi$. This completes the
proof.

\bibliographystyle{IEEEtran}
\bibliography{references}

\begin{thebibliography}{10}
\providecommand{\url}[1]{#1}
\csname url@samestyle\endcsname
\providecommand{\newblock}{\relax}
\providecommand{\bibinfo}[2]{#2}
\providecommand{\BIBentrySTDinterwordspacing}{\spaceskip=0pt\relax}
\providecommand{\BIBentryALTinterwordstretchfactor}{4}
\providecommand{\BIBentryALTinterwordspacing}{\spaceskip=\fontdimen2\font plus
\BIBentryALTinterwordstretchfactor\fontdimen3\font minus
  \fontdimen4\font\relax}
\providecommand{\BIBforeignlanguage}[2]{{%
\expandafter\ifx\csname l@#1\endcsname\relax
\typeout{** WARNING: IEEEtran.bst: No hyphenation pattern has been}%
\typeout{** loaded for the language `#1'. Using the pattern for}%
\typeout{** the default language instead.}%
\else
\language=\csname l@#1\endcsname
\fi
#2}}
\providecommand{\BIBdecl}{\relax}
\BIBdecl

\bibitem{Caire-Jindal-Kobayashi-Ravindran-TIT10}
G.~Caire, N.~Jindal, M.~Kobayashi, and N.~Ravindran, ``{Multiuser {MIMO}
  achievable rates with downlink training and channel state feedback},''
  \emph{IEEE Transaction on Information Theory}, vol.~56, no.~6, pp.
  2845--2866, May. 2010.

\bibitem{wang2012capacity}
C.-C. Wang, ``On the capacity of 1-to-{K} broadcast packet erasure channels
  with channel output feedback,'' \emph{IEEE Transaction on Information
  Theory}, vol.~58, no.~2, pp. 931--956, Feb. 2012.

\bibitem{gatzianas2013multiuser}
M.~Gatzianas, L.~Georgiadis, and L.~Tassiulas, ``Multiuser broadcast erasure
  channel with feedback-capacity and algorithms,'' \emph{IEEE Transaction on
  Information Theory}, vol.~59, no.~9, pp. 5779--5804, May. 2013.

\bibitem{maddah2012completely}
M.~A. Maddah-Ali and D.~Tse, ``Completely stale transmitter channel state
  information is still very useful,'' \emph{IEEE Transaction on Information
  Theory}, vol.~58, no.~7, pp. 4418--4431, April. 2012.

\bibitem{yi2013precoding}
X.~Yi and D.~Gesbert, ``Precoding methods for the {MISO} broadcast channel with
  delayed {CSIT},'' \emph{IEEE Transactions on Wireless Communications},
  vol.~12, no.~5, pp. 1--11, May. 2013.

\bibitem{ali2014approximate}
M.~A. Maddah-Ali and A.~S. Avestimehr, ``Approximate capacity region of the
  {MISO} broadcast channels with delayed {CSIT},'' \emph{IEEE Transaction on
  Communications}, vol.~64, no.~7, pp. 2913 -- 2924, June. 2016.

\bibitem{wangjue2013precoder}
J.~Wang, M.~Matthaiou, S.~Jin, and X.~Gao, ``Precoder design for multiuser
  {MISO} systems exploiting statistical and outdated {CSIT},'' \emph{IEEE
  Transactions on Communications}, vol.~61, no.~11, pp. 4551--4564, Sept. 2013.

\bibitem{clerckx2015space}
B.~Clerckx and D.~Gesbert, ``Space-{T}ime encoded {MISO} broadcast channel with
  outdated {CSIT}: An error rate and diversity performance analysis,''
  \emph{IEEE Transactions on Communications}, vol.~63, no.~5, pp. 1661--1675,
  Mar. 2015.

\bibitem{he2014capacity}
C.~He, S.~Yang, and P.~Piantanida, ``On the capacity of the fading broadcast
  channel with state feedback,'' in \emph{IEEE International Symposium on
  Communications, Control and Signal Processing (ISCCSP)}, May. 2014.

\bibitem{shayevitz2013capacity}
O.~Shayevitz and M.~Wigger, ``On the capacity of the discrete memoryless
  broadcast channel with feedback,'' \emph{IEEE Transaction on Information
  Theory}, vol.~59, no.~3, pp. 1329--1345, Mar. 2013.

\bibitem{kim2015note}
H.~Kim, Y.-K. Chia, and A.~El~Gamal, ``A note on the broadcast channel with
  stale state information at the transmitter,'' \emph{IEEE Transaction on
  Information Theory}, vol.~61, no.~7, pp. 3622--3631, May. 2015.

\bibitem{venkataramanan2013achievable}
R.~Venkataramanan and S.~Pradhan, ``An achievable rate region for the broadcast
  channel with feedback,'' \emph{IEEE Transaction on Information Theory},
  vol.~59, no.~10, pp. 6175--6191, June. 2013.

\bibitem{wu2016coding}
Y.~Wu and M.~Wigger, ``Coding schemes with rate-limited feedback that improve
  over the no feedback capacity for a large class of broadcast channels,''
  \emph{IEEE Transactions on Information Theory}, vol.~62, no.~4, pp.
  2009--2033, Feb. 2016.

\bibitem{tuncel2006slepian}
E.~Tuncel, ``Slepian-{W}olf coding over broadcast channels,'' \emph{IEEE
  Transaction on Information Theory}, vol.~52, no.~4, pp. 1469--1482, April.
  2006.

\bibitem{cover2012elements}
T.~M. Cover and J.~A. Thomas, \emph{Elements of {I}nformation {T}heory}.\hskip
  1em plus 0.5em minus 0.4em\relax John Wiley \& Sons, 2012.

\bibitem{elgamal_kim}
A.~{El Gamal} and Y.-H. Kim, \emph{Network {I}nformation {T}heory}.\hskip 1em
  plus 0.5em minus 0.4em\relax Cambridge University Press, 2011.

\bibitem{vaze2012dofnoCSIT}
C.~S. Vaze and M.~K. Varanasi, ``The {D}egree-of-{F}reedom regions of {MIMO}
  broadcast, interference, and cognitive radio channels with no {CSIT},''
  \emph{IEEE Transactions on Information Theory}, vol.~58, no.~8, pp.
  5354--5374, May. 2012.

\end{thebibliography}

\end{document}